\newtheorem{theorem}{Theorem}[section]
\newtheorem{proposition}[theorem]{Proposition}
\newtheorem{example}[theorem]{Example}
\newtheorem{remark}[theorem]{Remark}
\numberwithin{equation}{section}
\begin{document}

\title{Dual Representation of Expectile based expected shortfall and Its Properties}

\date{\today}

\author{Samuel Drapeau}
\thanks{National Science Foundation of China, Grants Numbers: 11971310 and 11671257; Grant ``Assessment of Risk and Uncertainty in Finance'' number AF0710020 from Shanghai Jiao Tong University; are gratefully acknowledged.}
\address{School of Mathematical Sciences \& Shanghai Advanced Institute for Finance (CAFR)\newline Shanghai Jiao Tong University, Shanghai, China}
\email{sdrapeau@saif.sjtu.edu.cn}
\urladdr{http://www.samuel-drapeau.info}

\author{Mekonnen Tadese}
\address{School of Mathematical Sciences\newline Shanghai Jiao Tong University, Shanghai, China}
\email{mekonnenta@sjtu.edu.cn}

\begin{abstract}
    The expectile can be considered as a generalization of quantile.
    While expected shortfall is a quantile based risk measure, we study its counterpart -- the expectile based expected shortfall -- where expectile takes the place of quantile.
    We provide its dual representation in terms of Bochner integral.
    Among other properties, we show that it is bounded from below in terms of convex combinations of expected shortfalls, and also from above by the smallest law invariant, coherent and comonotonic risk measure, for which we give the explicit formulation of the corresponding distortion function.
    As a benchmark to the industry standard expected shortfall we further provide its comparative asymptotic behavior in terms of extreme value distributions.
    Based on these results, we finally compute explicitly the expectile based expected shortfall for some selected class of distributions.

    \vspace{5pt}

    \noindent
    {Keywords:} Expectile; Expected Shortfall; Tail Conditional Expectation; Dual Representation; Coherent Risk Measure.
\end{abstract}

\maketitle
\section{Introduction}

Many risk measures proposed for the quantification of financial risks are given either directly or indirectly in terms of quantile of the distribution of the loss profile.
This includes the value at risk $q_\alpha(L) :=\inf \{ m\colon P[ L>m ]\geq \alpha \}$ and the derivation of is such as the tail conditional expectation and the expected shortfall
\begin{equation*}
    TCE_\alpha(L) := E\left[ L|L>q_{\alpha}(L) \right] \quad \text{and} \quad
    ES_{\alpha}(L) := \frac{1}{\alpha}\int_{0}^{\alpha} q_{u}(L) du
\end{equation*}
respectively.
While the expected shortfall is coherent in the sense of \citet{artzner1999}, the value at risk and tail conditional expectation are not sub-additive and hence not coherent, see \citep{tasche2002, foellmer2016}.
For diversification purposes, the expected shortfall is therefore preferred to these two quantile based risk measures.
However, the expected shortfall is not elicitable, which has recently been discussed as a useful property from a backtesting viewpoint, see \citet{gneiting2011, ziegel2014, emmer2015, chen2018}.
In terms of elicitablity and coherency, the expectile $e_\alpha$ which was first introduced by \citet{newey1987} and defined as the unique solution of
\begin{equation*}
    (1-\alpha)E\left[ \left(L-e_{\alpha}(L)\right)^+ \right] = \alpha E\left[ \left( L-e_{\alpha}(L) \right)^- \right]
\end{equation*}
is the only alternative coherent law invariant risk measure which is elicitable as shown by \citet{weber2006}, \citet{ziegel2014} and  \citet{bellini2015b}.

As discussed by \citet{bellini2014}, the expectile can be seen as a generalization of quantile.
We therefore revisit the former quantile based risk measure by considering the expectile instead of quantile, namely
\begin{equation*}
    tce_\alpha (L)  := E\left[ L|L>e_\alpha(L) \right] \quad \text{and}\quad es_{\alpha}(L)  := \frac{1}{\alpha}\int_{0}^{\alpha}e_{u}(L)du 
\end{equation*}
thereafter referred to as expectile based tail conditional expectation and expectile based expected shortfall, respectively.

The notion of expectile based tail conditional expectation $tce_\alpha$ and expected shortfall $es_\alpha$ is relatively new.
Expectile based tail conditional expectation was first introduced in \citet{tylor2008} for the estimation of the expected shortfall from expectile for loss profile with continuous distribution.
Though it is positive homogeneous and cash invariant, it is however not monotone and sub-additive in general, see \citet{daouia2019}.
For this reason, \citet{daouia2019} criticized the estimation $ES_\alpha$ from $tce_\alpha$ and propose $es_\alpha$ which is coherent.
They further showed that for Fr\'echet type of extreme value distribution, $tce_\alpha$ and $es_\alpha$ are asymptotically equivalent.

In this paper we systematically study the properties of these expectile based risk measures under the light of recent results obtained in \citep{mekonnen2019}.
Since the expectile based expected shortfall $es_{\alpha}$ turns out to be a coherent risk measure, we provide its dual set in terms of Bochner integral
\begin{equation*}
    \mathcal{Q}_{es}=\left\{Y\in \mathcal{Q}\colon Y=\frac{1}{\alpha}\int_0^\alpha \mathbf{Y}(u)du \quad\text{for some } \mathbf{Y}\in \mathcal{Y} \text{ with } \int_0^\alpha ||\mathbf{Y}(u)||_\infty du <\infty \right\},
\end{equation*}
where $Q$ is the set of probability densities in $L^\infty$, and $\mathcal{Y}$ is the set of strongly measurable functions $\mathbf{Y}:(0,\alpha] \to L^\infty$ such that $\mathbf{Y}(u)$ is in the dual set of $e_{u}$ for almost every $u$.
We further bound $es_{\alpha}$ from below in terms of combinations of expected shortfall, that is
\begin{equation*}
    \sup_{0<\beta<1}\left\{ \left( 1-\gamma_\beta \right)ES_\beta(L)+\gamma_\beta E[L] \right\}\leq es_\alpha(L)
\end{equation*}
where $\gamma_\beta$ has an explicit expression given by relation \eqref{eq:gamma}.
Though $es_\alpha$ is not comonotonic, in the sense of \citet{delbaen2013} we provide the smallest comonotonic risk measure dominating $es_\alpha$, that is
\begin{equation*}
    es_{\alpha}(L) \leq R_{\varphi}(L):= \int_{0}^{1}\varphi^\prime_+(t) q_{t}(L) dt 
\end{equation*}
where the concave distortion function $\varphi$ is explicitly given by Relation \eqref{eq:primitive}.
To compare the value of $es_\alpha$ with respect to the industry standard expected shortfall and value at risk, we provide their asymptotic relative behavior for each extreme value distribution type -- Fr\'echet, Weibull and Gumbel.
Finally, based on the present result we provide explicit expression for $es_\alpha$ for several classical distributions.

The rest of the paper is organized as follows.
In Section 2, we introduce some basic notations, and definitions of the quantile and expectile based risk measures.
In Section 3, we address the dual representation of expectile based expected shortfall.
In Section 4, we provide properties, bounds and asymptotic results for the expectile based expected shortfall.
Section 5 illustrates those results with examples for some loss profiles with known distribution.

\section{Basic Definition and Preliminaries}\label{sec:sec02}
Let $(\Omega, \mathcal{F},P)$ be an atomless probability space.
Throughout, $L^1$ and $L^\infty$ denote the set of integrable and essentially bounded random variables identified in the $P$ almost sure sense, respectively.
For each $L$ in $L^1$, $F_L$ represents its cumulative distribution.
We also denotes by $\mathcal{Q}$, the set of densities in $L^\infty$ for probability measures that are absolutely continuous with respect to $P$, that is,
\begin{equation*}
    \mathcal{Q}=\left\{Y\in L^\infty\colon Y\geq 0 \text{ and } E[Y]=1\right\}.
\end{equation*}
We say $R\colon L^1 \to (-\infty,\infty]$ is a \emph{coherent risk measure}, if $R$ is 
\begin{itemize}
    \item \emph{Monotone}: $R(L_1)\leq R(L_2)$ whenever $L_1\leq L_2$;
    \item \emph{Cash-invariant}: $R(L-m)=R(L)-m$ for each $m$ in $\mathbb{R}$;
    \item \emph{Sub-additive}: $R(L_1+L_2)\leq R(L_1)+R(L_2)$;
    \item \emph{Positive homogeneous}: $R(\lambda L)=\lambda R(L)$ for every $\lambda\geq 0$.
\end{itemize}
A coherent risk measure is further called \emph{Fatou continuous}, if $R(L)\leq \liminf_n R(L_n)$ whenever $(L_n)$ is a sequence dominated in $L^1$ and converges to $L$ $P$ almost surely.

For $L$ in $L^1$, we consider the quantile based functions 
\begin{enumerate}[label=$\bullet$, fullwidth]
    \item \emph{Value at Risk}: for $0<\alpha <1$,
        \begin{equation*}
            q_{\alpha}(L)=\inf\left\{ m\colon F_L(m)\geq 1-\alpha \right\}.
        \end{equation*}
    \item \emph{Tail Conditional Expectation}: for $0<\alpha <1$,
        \begin{equation*}
            TCE_\alpha(L)=E[L|L>q_\alpha(L)].
        \end{equation*}
    \item \emph{Expected Shortfall}: for $0< \alpha< 1$,
        \begin{equation*}
            ES_{\alpha}(L)=\frac{1}{\alpha}\int_0^\alpha q_u(L)du.
        \end{equation*}
\end{enumerate}
It is well known that the value at risk is not sub-additive -- not even convex -- and therefore not a coherent risk measure, see \citep{tasche2002,foellmer2016}.
While the expected shortfall is coherent and coincides with the  tail conditional expectation  for loss profiles with continuous distributions, in general $ES_\alpha\geq TCE_\alpha$ and $TCE_\alpha$ may not be sub-additive, see \citep{artzner1999,foellmer2016,acerbi2002}.

For risk level $\alpha$ in $(0,1/2]$, the expectile $e_\alpha$ of $L$ in $L^1$ is defined as the unique solution of 
\begin{equation}\label{eq:foc01}
    (1-\alpha)E[(L-e_\alpha(L))^+]=\alpha E[(L-e_\alpha(L))^-].
\end{equation}
It turns out that the expectile is a law invariant, finite valued, and the only elicitable and coherent risk measure, see \citep{weber2006, ziegel2014, bellini2015b, delbaen2016}.
From \citep{bellini2014}, its dual representation is given by
\begin{equation*}
     e_{\alpha}(L) = \max_{Y\in \mathcal{Q}_{\alpha}} E[LY],
\end{equation*}
where
\begin{equation}\label{eq:Expect01}
    \mathcal{Q}_{\alpha}:=\left\{Y\in  \mathcal{Q}\colon \gamma \leq Y\leq \frac{(1-\alpha)\gamma}{\alpha}\text{ for some }\gamma\in \left[\frac{\alpha}{1-\alpha},1\right]\right\}.
\end{equation}
Since expectile can be seen as a generalization of quantiles, if $q_\alpha$ is replaced by $e_\alpha$ in the definition of $TCE_\alpha$ and $ES_\alpha$, then we get the expectile based functions on $L^1$ defined as
\begin{enumerate}[label=$\bullet$, fullwidth]
    \item \emph{Expectile based Tail Conditional Expectation}: for $0<\alpha \leq 1/2$,
        \begin{equation*}
            tce_\alpha(L)=E[L|L>e_\alpha(L)].
        \end{equation*}
    \item \emph{Expectile based Expected Shortfall}: for $0< \alpha\leq 1/2$,
        \begin{equation*}
            es_{\alpha}(L)=\frac{1}{\alpha}\int_{0}^{\alpha}e_u(L) du.
        \end{equation*}
\end{enumerate}
If $L$ is not identically constant, it holds that  $tce_\alpha(L)=ES_{\beta^\ast}(L)$, where $\beta^\ast=P[L>e_\alpha(L)]$.
It is also known that $tce_\alpha$ is not monotone and sub-additive in general and hence, not a coherent risk measure, see \citep{daouia2019}.
However, $es_\alpha$ is coherent and has the following properties.
\begin{proposition}
The expectile based expected shortfall is law invariant, $(-\infty,\infty]$ valued, coherent and Fatou continuous. 
\end{proposition}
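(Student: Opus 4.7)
The plan is to push each property from the expectile $e_u$ to its average $es_\alpha(L) = \frac{1}{\alpha}\int_0^\alpha e_u(L)\,du$, relying on the fact (already stated in the excerpt and due to \citet{weber2006, ziegel2014, bellini2015b, delbaen2016}) that every $e_u$ with $u\in(0,1/2]$ is a finite valued, law invariant, coherent and Fatou continuous risk measure on $L^1$.

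Before addressing the individual properties, I would first justify that the integral defining $es_\alpha(L)$ makes sense. The map $u\mapsto e_u(L)$ is monotone non-increasing on $(0,1/2]$ (as can be read off from the first order condition \eqref{eq:foc01}: lowering $u$ shifts more weight onto the positive part, forcing $e_u(L)$ to grow), hence Borel measurable. Since $e_{1/2}(L)=E[L]$, coherence together with monotonicity of $u\mapsto e_u$ gives $e_u(L)\geq E[L]$ for every $u\in(0,\alpha]$. Therefore $\int_0^\alpha e_u(L)\,du$ is a well defined element of $(-\infty,\infty]$, with the value $+\infty$ occurring only when the increasing behavior of $e_u(L)$ near $u=0$ is not integrable. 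This settles the $(-\infty,\infty]$ valued assertion.

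Law invariance is then immediate: if $L\stackrel{d}{=}L'$ then $e_u(L)=e_u(L')$ for every $u$, hence the integrals coincide. For coherence, each of monotonicity, cash invariance, sub-additivity and positive homogeneity is preserved under $u$-wise operations, so one pulls the corresponding inequality (or identity) for $e_u$ inside the integral and divides by $\alpha$. No subtle interchange is needed here, only the monotonicity of the Lebesgue integral.

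The only step that needs a genuine argument is Fatou continuity. Given a sequence $(L_n)$ dominated in $L^1$ by some $L^\ast$ and converging $P$-a.s.\ to $L$, the Fatou property of each $e_u$ yields $e_u(L)\leq \liminf_n e_u(L_n)$ pointwise in $u$. To exchange $\liminf$ with the integral I would apply Fatou's lemma to the sequence $f_n(u):=e_u(L_n)+E[L^\ast]$, which is nonnegative because $e_u(L_n)\geq E[L_n]\geq -E[|L^\ast|]\geq -E[L^\ast]$ uniformly in $n$ and $u$. This yields
\begin{equation*}
\int_0^\alpha e_u(L)\,du \leq \int_0^\alpha \liminf_n e_u(L_n)\,du \leq \liminf_n \int_0^\alpha e_u(L_n)\,du,
\end{equation*}
and dividing by $\alpha$ gives $es_\alpha(L)\leq \liminf_n es_\alpha(L_n)$. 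The main (mild) obstacle is precisely this step, namely ensuring a uniform integrable lower bound on $u\mapsto e_u(L_n)$ so that Fatou's lemma can legitimately be applied; the bound $e_u(L_n)\geq -E[L^\ast]$ coming from coherence and the domination hypothesis does the job.
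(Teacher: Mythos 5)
Your proof is correct and follows essentially the same route as the paper: reduce each property of $es_\alpha$ to the corresponding property of $e_u$ and integrate, with Fatou's lemma giving the Fatou property. The only differences are minor refinements: you obtain measurability of $u\mapsto e_u(L)$ from monotonicity where the paper invokes continuity of $u \mapsto e_u(L)$, and you make explicit the integrable lower bound $e_u(L_n)\geq -E[L^\ast]$ needed to legitimately apply Fatou's lemma, a point the paper leaves implicit.
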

\begin{proof}
    It is known that the map  $u\mapsto e_u(L)$ is continuous on $(0,1/2]$, see \citep{newey1987,bellini2014}.
    It implies that $e_\cdot(L)$ is measurable.
    Since $E[L]\leq e_u(L)$ for each $u\in (0,\alpha]$, it holds that the integration is well defined and the range of $es_\alpha$ is a subset of $(-\infty,\infty]$.
    The law invariance and coherent properties of $es_\alpha$ directly follows from expectile.
    Let $(L_n)$ be a sequence dominated in $L^1$ and converging to $L$ almost surely.
    Since a finite valued coherent risk measure is Fatou continuous, it holds that $e_\alpha$ is also Fatou continuous, see \citep{kaina2008}.
    The Fatou continuity of $e_u$ together with Fatou's Lemma yields 
    \begin{multline*}
        es_\alpha(L) = \frac{1}{\alpha}\int_0^\alpha e_u(L)du \leq \frac{1}{\alpha}\int_0^\alpha \left(\liminf_n e_u(L_n)\right)du\\
         \leq \liminf_n \frac{1}{\alpha}\int_0^\alpha e_u(L_n)du=\liminf_n es_\alpha (L_n).
    \end{multline*}
    This ends the proof of the proposition.
\end{proof}

\section{Dual Representation of expectile based expected shortfall}
The expectile based expected shortfall is law-invariant, coherent and Fatou continuous.
Hence, it admits a representation of the form 
\begin{equation*}
    es_\alpha(L)=\sup_{Y\in \mathcal{Q}_{es}}E[LY],
\end{equation*}
for some $\mathcal{Q}_{es}\subseteq \mathcal{Q}$ which is called the dual set of $es_\alpha$, see \citep{kaina2008,biagini2009,cheridito2008}.
This section is dedicated to describe the set $\mathcal{Q}_{es}$.
Throughout this section, we consider the measurable space $(I, \mathcal{I}, \mu)$, where $I=(0,\alpha]$, $\mathcal{I}$ is the Borel sigma algebra of $I$ and $\mu$ is the Lebesgue measure on $\mathcal{I}$.
We denote by $L^0_s(L^\infty)$, the space of all step functions on $I$ with values in $L^\infty$ identified $\mu$  almost every where, that is, 
\begin{equation*}
    L^0_s(L^\infty)=\left\{\sum_{n=1}^\infty L_n 1_{I_n}\colon (L_n) \text{ is a sequence in } L^\infty \text{ and }(I_n)\subseteq \mathcal{I} \text{ is a partition of } I \right\}, 
\end{equation*}
where $1_A$ is the indicator function whose value is $1$ for $u$ in $A$ and $0$, otherwise.
We say a function $\mathbf{Y}\colon I\to L^\infty$ is \emph{measurable} (strongly), if there exist a sequence $(\mathbf{Y}_n)$ in $L^0_s(L^\infty)$ such that $||\mathbf{Y}_n(u)-\mathbf{Y}(u)||_\infty\to 0 $ $\mu$ almost every where .
We also denote by $L^0(L^\infty)$, the spaces of all  measurable functions on $I$ with values in $L^\infty$.
We extend the norm $||\cdot||_\infty$ to  $L^0(L^\infty)$  as
\begin{equation*}
    ||\mathbf{Y}||_\infty(u)=\lim_{n\nearrow \infty}||\mathbf{Y}_n(u)||_\infty,
\end{equation*}
where $(\mathbf{Y}_n)$ is a sequence in $L^0_s(L^\infty)$ such that $||\mathbf{Y}_n(u)-\mathbf{Y}(u)||_\infty \to 0$  $\mu$ almost every where .
Finally, $L^0(I)$ denotes the space of all real valued random variables on $I$ identified $\mu$  almost every where.
Throughout this section, all equalities and inequalities in $L^0(I)$  are identified in the $\mu$  almost every where sense.
Clearly, $u\mapsto ||\mathbf{Y}||_\infty(u)$ and $u\mapsto \langle L,\mathbf{Y}(u)\rangle:=E[L\mathbf{Y}(u)]$  with $L\in L^1$ and $\mathbf{Y}\in L^0(L^\infty)$ are in $L^0(I)$.
It also holds that $L\mapsto e_{\cdot}(L)$ is a function from $L^1$ to $L^0(I)$.

\begin{proposition}\label{prop:almostdual}
    The expectile based expected shortfall admits the representation
    \begin{equation}\label{eq:dual01}
        es_\alpha(L)=\sup_{\mathbf{Y}\in \mathcal{Y}} \frac{1}{\alpha}\int_I E[L\mathbf{Y}]d\mu,
    \end{equation}
    where 
    \begin{multline*}
        \mathcal{Y}=\left\{\mathbf{Y}\in L^0(L^\infty)\colon E[\mathbf{Y}(u)]=1 \text{ and } \gamma(u) \leq \mathbf{Y}(u)\leq \frac{(1-u)\gamma(u)}{u} \right. \\ \left. \text{ for some } \gamma\in L^0(I) \text{ such that } \frac{u}{1-u}\leq \gamma(u)\leq 1 \right\}.
    \end{multline*}

    Furthermore, $\mathbf{Y}^\ast$ in $\mathcal{Y}$  is optimal, if and only if $\mathbf{Y}^\ast(u)$ is the optimal density of $e_u$  for $\mu$-almost all $u$ in $I$.
\end{proposition}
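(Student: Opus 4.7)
The plan is to reduce Proposition~\ref{prop:almostdual} to the pointwise dual representation of the expectile, $e_u(L)=\max_{Y\in\mathcal{Q}_u}E[LY]$ from \eqref{eq:Expect01}, and then integrate in $u$ over $I=(0,\alpha]$. For the upper bound, observe that the constraints defining $\mathcal{Y}$ are exactly the pointwise version of those defining $\mathcal{Q}_u$; thus for any $\mathbf{Y}\in\mathcal{Y}$ we have $\mathbf{Y}(u)\in\mathcal{Q}_u$ for $\mu$-a.e.\ $u$, so $E[L\mathbf{Y}(u)]\leq e_u(L)$ a.e., and integration yields $\frac{1}{\alpha}\int_I E[L\mathbf{Y}]\,d\mu\leq es_\alpha(L)$.

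For the reverse inequality I would exhibit an explicit pointwise optimizer. A standard computation shows that the expectile dual problem is attained at the two-valued density
\[
Y^*_u=\gamma(u)\mathbf{1}_{\{L\leq e_u(L)\}}+\frac{(1-u)\gamma(u)}{u}\mathbf{1}_{\{L>e_u(L)\}},
\]
with $\gamma(u)=u\bigl/\bigl((1-u)P(L>e_u(L))+u\,P(L\leq e_u(L))\bigr)$. One checks that $\gamma(u)\in[u/(1-u),1]$ for $u\in(0,1/2]$, that $E[Y^*_u]=1$, and, using the first-order condition \eqref{eq:foc01}, that $E[L Y^*_u]=e_u(L)$. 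Setting $\mathbf{Y}^*(u):=Y^*_u$ yields the candidate maximizer, and by construction $\gamma$ is measurable in $u$ (since $u\mapsto e_u(L)$ is continuous and $u\mapsto P(L>e_u(L))$ is monotone).

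The main obstacle is to show that $\mathbf{Y}^*$ is strongly measurable in $L^\infty$. Since $L^\infty$ is non-separable and $u\mapsto\mathbf{1}_{\{L>e_u(L)\}}$ is in general not $L^\infty$-norm continuous (the symmetric difference between $\{L>e_u\}$ and $\{L>e_{u'}\}$ can have positive measure for arbitrarily close $u,u'$), a naive step-function approximation fails. The argument has to exploit the continuity and strict monotonicity of $u\mapsto e_u(L)$: the set of $u$ with $P(L=e_u(L))>0$ is at most countable, and on its complement one combines a Lusin-type measurable selection with a lifting on $L^\infty$ to produce a strongly measurable representative of $\mathbf{Y}^*$ that coincides with $Y^*_u$ for $\mu$-a.e.\ $u$.

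The final \emph{if and only if} characterization is an immediate consequence of the first two steps. For any $\mathbf{Y}^*\in\mathcal{Y}$ achieving the supremum in \eqref{eq:dual01}, the pointwise bound $E[L\mathbf{Y}^*(u)]\leq e_u(L)$ combined with equality of the integrals forces equality for $\mu$-a.e.\ $u$, so $\mathbf{Y}^*(u)$ is an optimal density for $e_u$ almost everywhere; conversely, any strongly measurable $\mathbf{Y}^*\in\mathcal{Y}$ whose pointwise values realize the expectile duals clearly attains the supremum.
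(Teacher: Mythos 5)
Your first step (any $\mathbf{Y}\in\mathcal{Y}$ satisfies $\mathbf{Y}(u)\in\mathcal{Q}_u$ for $\mu$-a.e.\ $u$, hence the supremum is at most $es_\alpha(L)$) and your closing equivalence argument are exactly the paper's. The genuine gap is in the reverse inequality: you try to \emph{attain} the supremum with the pointwise optimizer $\mathbf{Y}^\ast(u)=\gamma(u)1_{\{L\le e_u(L)\}}+\frac{(1-u)\gamma(u)}{u}1_{\{L>e_u(L)\}}$, and you correctly identify strong measurability as the obstacle, but this obstacle is not repairable. A $\mu$-a.e.\ limit of $L^\infty$-valued step functions is necessarily a.e.\ separably valued, and your $\mathbf{Y}^\ast$ in general is not. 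Concretely, for $L$ uniform on $[0,1]$ one computes $\gamma(u)=\sqrt{u/(1-u)}$ and $(1-u)\gamma(u)/u=\sqrt{(1-u)/u}$; for $u<u'$ in $(0,\alpha]$ the set $\{e_{u'}(L)<L\le e_u(L)\}$ has positive probability, and on it $\mathbf{Y}^\ast(u)=\sqrt{u/(1-u)}$ while $\mathbf{Y}^\ast(u')=\sqrt{(1-u')/u'}$, so $\|\mathbf{Y}^\ast(u)-\mathbf{Y}^\ast(u')\|_\infty\ge\sqrt{(1-\alpha)/\alpha}-\sqrt{\alpha/(1-\alpha)}>0$ when $\alpha<1/2$. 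The values thus form an uncountable, uniformly norm-separated subset of $L^\infty$; no Lusin-type selection or lifting can change this, since modifying $\mathbf{Y}^\ast$ on a $\mu$-null set still leaves uncountably many separated values. Hence $\mathbf{Y}^\ast\notin L^0(L^\infty)$, so $\mathbf{Y}^\ast\notin\mathcal{Y}$, and the supremum in \eqref{eq:dual01} is in general \emph{not attained} (the ``furthermore'' clause only characterizes optimizers when they exist; it does not assert existence). Your proof of equality therefore breaks down exactly where it relies on attainment.

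The fix --- and the paper's route --- is to approximate rather than attain. For the partition $t^n_k=k\alpha/n$, the step functions $\mathbf{Y}=\sum_{k=0}^{n-1}Y_{t^n_{k+1}}1_{(t^n_k,t^n_{k+1}]}$ with $Y_{t^n_{k+1}}\in\mathcal{Q}_{t^n_{k+1}}$ belong to $\mathcal{Y}$ (take $\gamma$ piecewise constant), and maximizing each $Y_{t^n_{k+1}}$ separately --- here your two-valued first-order-condition optimizer is exactly what is needed, or one simply invokes $e_{t^n_{k+1}}(L)=\max_{Y\in\mathcal{Q}_{t^n_{k+1}}}E[LY]$ --- gives $\sup_{\mathbf{Y}\in\mathcal{Y}^n}\frac{1}{\alpha}\int_I E[L\mathbf{Y}]\,d\mu=\frac{1}{\alpha}\int_I e^n_\cdot(L)\,d\mu$ with $e^n_u(L)=e_{t^n_{k+1}}(L)$ on $(t^n_k,t^n_{k+1}]$. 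Since $u\mapsto e_u(L)$ is nonincreasing and continuous, $e^n_\cdot(L)\le e_\cdot(L)$ and $e^n_\cdot(L)\to e_\cdot(L)$ $\mu$-a.e., so passing to the limit (monotone convergence along refining partitions) yields $\sup_{\mathbf{Y}\in\mathcal{Y}}\frac{1}{\alpha}\int_I E[L\mathbf{Y}]\,d\mu\ge es_\alpha(L)$, closing the argument without any measurable-selection machinery.
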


\begin{proof}
    As a result of Relation \ref{eq:Expect01}, it follows that $\mathbf{Y}(u)$ in $\mathcal{Q}_u$ for $\mu$-almost all $u$ in $I$.
    This implies that $e_{\cdot}(L)\geq E[L\mathbf{Y}(\cdot)]$ for all $\mathbf{Y}$ in $\mathcal{Y}$ and therefore 
    \begin{equation}\label{eq:almostdual01}
        es_\alpha(L)\geq \sup_{\mathbf{Y} \in \mathcal{Y}}\frac{1}{\alpha}\int_I E[L\mathbf{Y}] d\mu.
    \end{equation}
    For each $n$ in $\mathbb{N}$, we  consider the partition $\Pi^n$ of $[0,\alpha]$ given by $\Pi^n:=\{t^n_k=k\alpha/n\colon k=0,\dots,n\}$.
    Let
    \begin{equation*}
        \mathcal{Y}^n:=\left\{\mathbf{Y}(u)=\sum_{k=0}^{n-1}Y_{t^n_{k+1}}1_{(t^n_k,t^n_{k+1}]}(u)\colon Y_{t^n_{k+1}}\in \mathcal{Q}_{t^n_{k+1}} \text{for all }k=0,\dots, n-1\right\}.
    \end{equation*}
    For $\mathbf{Y}$ in $\mathcal{Y}^n$, it holds that $\mathbf{Y}\in L^0_s(L^\infty)$ and $E[\mathbf{Y}(u)]=1$.
    Since $Y_{t^n_{k+1}}\in \mathcal{Q}_{t^n_{k+1}}$, there exist $\gamma_{t^n_{k+1}}$ in $\mathbb{R}$ such that 
    \begin{equation*}
        \gamma_{t^n_{k+1}}\in  \left[\frac{t^n_{k+1}}{1-t^n_{k+1}},1\right] \quad \text{and} \quad \gamma_{t^n_{k+1}}\leq Y_{t^n_{k+1}}\leq \frac{(1-t^n_{k+1})\gamma_{t^n_{k+1}}}{t^n_{k+1}}
    \end{equation*}
    for all $k=0,\dots,n-1$.
    Define 
    \begin{equation*}
        \gamma(u):=\sum_{k=0}^{n-1} \gamma_{t^n_{k+1}}1_{(t^n_k, t^n_{k+1}]}(u).
    \end{equation*}
    It follows that $\gamma$ in $L^0(I)$, $(1-u)/u \leq \gamma(u)\leq 1$  and $\gamma(u)\leq \mathbf{Y}(u) \leq (1-u)\gamma(u) /u$.
    Hence, $\mathcal{Y}^n\subseteq \mathcal{Y}$ and Relation \eqref{eq:almostdual01} further implies that 
    \begin{multline}\label{eq:almostdual02}
        es_\alpha(L)\geq \sup_n \left\{\sup_{\mathbf{Y}\in \mathcal{Y}^n} \frac{1}{\alpha}\sum_{k=0}^{n-1} E[LY_{t^n_{k+1}}] (t^n_{k+1}-t^n_k)\right\}=\sup_n \frac{1}{\alpha} \int_I e^n_{\cdot}(L) d\mu,
    \end{multline}
    where $e^n_u(L):=\sum_{k=0}^{n-1} e_{t^n_{k+1}}(L) 1_{(t^n_k, t^n_{k+1}]}(u)$ which is a sequence in $L^0(I)$ such that $e^n_{\cdot}(L)\nearrow e_{\cdot}(L)$ $\mu$ almost everywhere.
    The monotone convergence theorem together with Relation \eqref{eq:almostdual02} yields 
    Relation \eqref{eq:dual01}.

    Let $\mathbf{Y}^\ast$ in $\mathcal{Y}$ be given.
    By the definition of $\mathcal{Y}$, we always have $e_u(L)-E[L\mathbf{Y}^\ast(u)] \geq 0$.
    If $\mathbf{Y}^\ast$ is optimal, then  $\int_I \left(e_{\cdot}(L)-E[L\mathbf{Y}^\ast] \right) d\mu =0$ and hence, $e_{\cdot}(L)=E[L\mathbf{Y}]$.
    The converse statement is clear ending the proof.
\end{proof}

Finally, to provide the dual representations of $es_\alpha$, we need the Bochner integral. 
The step function $\mathbf{Y}=\sum_{n=1}^\infty Y_n 1_{I_n}$ in $L^0_s(L^\infty)$ is  said to be Bochner integrable  with respect to the measure $\mu$, provide that $\int_I ||\mathbf{Y}||_\infty d\mu=\sum_{n=1}^\infty ||Y_n||_\infty\mu(I_n)<\infty$.
In this case, the Bochner integral of $\mathbf{Y}$ is denoted by $\int_I \mathbf{Y}d\mu $ and given by 
\begin{equation*}
    \int_I \mathbf{Y}d\mu\colon =\sum_{n=1}^\infty Y_n \mu(I_n).
\end{equation*}
A function $\mathbf{Y}$ in $L^0(L^\infty)$ is also said to be Bochner integrable, if there exist a Bochner integrable sequence $(\mathbf{Y}_n)$ in $L^0_s(L^\infty)$ such that  $||\mathbf{Y}_n(u)-\mathbf{Y}(u)||_\infty\to 0 $ $\mu$ almost every where  and $\int_I ||\mathbf{Y}_n(u) - \mathbf{Y}(u)||_\infty \mu(du) \to 0$.
In this case, the Bochner integral of $\mathbf{Y}$ with respect to $\mu$ is given by 
\begin{equation*}
    \int_I \mathbf{Y}d\mu:=\lim_{n\nearrow \infty} \int_I\mathbf{Y}_n d\mu.
\end{equation*}
It is well known that $\mathbf{Y}$ in $L^0(L^\infty)$ is Bochner integrable if and only if $\int_I ||\mathbf{Y}(u)||_\infty \mu(du)$ is finite.
For Bochner integrable function $\mathbf{Y}$ in $L^0(L^\infty)$, it also holds that 
\begin{equation}\label{eq:bochner}
    E\left[L \int_I \mathbf{Y}d\mu\right] =\int_I E[L\mathbf{Y}] d\mu, \quad \text{for all }L\in L^1,
\end{equation}
see \citep{hille1957} for instance. 
With this at hands, the dual representation reads as follows.
\begin{theorem}\label{thm:dual}
    The expectile based expected shortfall $es_\alpha$ admits the dual representation
    \begin{equation*}
        es_\alpha(L) = \sup_{Y\in \bar{\mathcal{Q}}_{es}} E[LY],
    \end{equation*}
    where $\bar{\mathcal{Q}}_{es}$ is the $\sigma(L^\infty, L^1)$-closure of the non-empty and convex set 
    \begin{equation*}
        \mathcal{Q}_{es}=\left\{Y\in \mathcal{Q}\colon Y=\frac{1}{\alpha}\int_I\mathbf{Y}d\mu \quad\text{for some } \mathbf{Y}\in \mathcal{Y} \text{ with } \int_I ||\mathbf{Y}||_\infty d\mu <\infty \right\}.
    \end{equation*}
    Furthermore, $Y^\ast\in \mathcal{Q}_{es}$ is optimal for $es_\alpha$ if and only if  $\mathbf{Y}^\ast$ in $\mathcal{Y}$ for which $Y^\ast =\frac{1}{\alpha}\int_I \mathbf{Y}^\ast d\mu$ is optimal for Relation \eqref{eq:dual01}.
\end{theorem}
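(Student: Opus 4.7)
The strategy is to combine Proposition~\ref{prop:almostdual} with the Bochner identity~\eqref{eq:bochner}, exploiting the fact that the step-function families $\mathcal{Y}^n$ already used in the proof of Proposition~\ref{prop:almostdual} are automatically Bochner integrable (they take only finitely many $L^\infty$-values on the bounded interval $I$) and therefore produce genuine densities in $\mathcal{Q}_{es}$.

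First I would check that $\mathcal{Q}_{es}$ is non-empty, convex and a subset of $\mathcal{Q}$. The choice $\mathbf{Y}\equiv 1$ (with $\gamma\equiv 1$) lies in $\mathcal{Y}$ and is trivially Bochner integrable, yielding $Y\equiv 1\in\mathcal{Q}_{es}$. Convexity of $\mathcal{Y}$ is inherited by taking convex combinations of the corresponding envelopes $\gamma$, and linearity of the Bochner integral transports it to $\mathcal{Q}_{es}$. For $Y=\tfrac{1}{\alpha}\int_I\mathbf{Y}d\mu\in\mathcal{Q}_{es}$: non-negativity follows from $\mathbf{Y}(u)\geq\gamma(u)\geq 0$, essential boundedness from the fact that a Bochner integral of an integrable function lies in the target Banach space, and $E[Y]=1$ from applying~\eqref{eq:bochner} with $L=1$ together with $E[\mathbf{Y}(u)]=1$.

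Next I would prove the identity $es_\alpha(L)=\sup_{Y\in\mathcal{Q}_{es}}E[LY]$. For the $\geq$ direction, every $Y=\tfrac{1}{\alpha}\int_I\mathbf{Y}d\mu\in\mathcal{Q}_{es}$ satisfies, by~\eqref{eq:bochner} and Proposition~\ref{prop:almostdual},
\[
E[LY]=\frac{1}{\alpha}\int_I E[L\mathbf{Y}]d\mu\leq es_\alpha(L).
\]
For the reverse direction, I would rerun the monotone approximation from the proof of Proposition~\ref{prop:almostdual}: each $\mathbf{Y}\in\mathcal{Y}^n$ is finite-step and $L^\infty$-valued, hence Bochner integrable, so $Y:=\tfrac{1}{\alpha}\int_I\mathbf{Y}d\mu$ belongs to $\mathcal{Q}_{es}$, and~\eqref{eq:bochner} rewrites the Riemann-type sum $\tfrac{1}{\alpha}\sum_{k} E[L Y_{t^n_{k+1}}](t^n_{k+1}-t^n_k)$ as $E[LY]$. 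Combined with the monotone-convergence step $\sup_n\tfrac{1}{\alpha}\int_I e^n_\cdot(L)d\mu=es_\alpha(L)$ established in that proof, this yields $es_\alpha(L)\leq\sup_{Y\in\mathcal{Q}_{es}}E[LY]$.

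Finally, passing to the $\sigma(L^\infty,L^1)$-closure does not alter the supremum, because for every $L\in L^1$ the functional $Y\mapsto E[LY]$ is by definition $\sigma(L^\infty,L^1)$-continuous on $L^\infty$; the closure is retained only to display $\bar{\mathcal{Q}}_{es}$ as a closed dual set, in line with the standard Kusuoka-type convention for law-invariant Fatou-continuous coherent risk measures. The optimality characterization is then immediate from~\eqref{eq:bochner}: $E[LY^\ast]=\tfrac{1}{\alpha}\int_I E[L\mathbf{Y}^\ast]d\mu$, so $Y^\ast$ attains $es_\alpha(L)$ if and only if $\mathbf{Y}^\ast$ attains the supremum in~\eqref{eq:dual01}. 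I expect the only subtle point to be the reverse inequality, where it is important to recognise that no additional truncation of potentially unbounded $\mathbf{Y}\in\mathcal{Y}$ is required --- the step functions in $\mathcal{Y}^n$ already lie inside the Bochner-integrable subclass that defines $\mathcal{Q}_{es}$, so Proposition~\ref{prop:almostdual} is in effect pre-packaged for the present purpose.
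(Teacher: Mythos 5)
Your proposal is correct and follows essentially the same route as the paper: the inequality $\sup_{Y\in\mathcal{Q}_{es}}E[LY]\leq es_\alpha(L)$ via Relation \eqref{eq:bochner} and Proposition \ref{prop:almostdual}, the reverse inequality by observing that the step functions in $\mathcal{Y}^n$ are Bochner integrable so that $\mathcal{Q}^n_{es}\subseteq\mathcal{Q}_{es}$ and the monotone approximation of Proposition \ref{prop:almostdual} carries over, and the closure and optimality statements handled exactly as in the paper. The extra details you supply (non-emptiness via $\mathbf{Y}\equiv 1$, convexity of $\mathcal{Y}$, weak-$\ast$ continuity of $Y\mapsto E[LY]$) are consistent with, and only elaborate on, the paper's argument.
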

\begin{proof}
    Let $Y$ be in $\mathcal{Q}_{es}$, that is, $ Y=\frac{1}{\alpha} \int_I \mathbf{Y} d\mu$ for some Bochner integrable function $\mathbf{Y}$ in $\mathcal{Y}$.
    Relation \eqref{eq:bochner} and Proposition \ref{prop:almostdual} yields 
    \begin{equation*}
        E[LY]=E\left[L\frac{1}{\alpha}\int_I \mathbf{Y} d\mu\right]=\frac{1}{\alpha} \int_I E[L\mathbf{Y}]d\mu \leq es_\alpha(L).
    \end{equation*}
    It follows that
    \begin{equation}\label{eq:dual02}
        \sup_{Y\in \mathcal{Q}_{es}}E[LY] \leq es_\alpha(L).
    \end{equation}
    For each $\mathbf{Y}$ in $\mathcal{Y}^n$, it holds that
    \begin{equation*}
        \int_I ||\mathbf{Y}||_\infty d\mu =\sum_{k=0}^{n-1} ||Y_{t^n_{k+1}}||_\infty (t^n_{k+1}-t^n_k) =  \frac{\alpha}{n}\sum_{k=0}^{n-1} ||Y_{t^n_{k+1}}||_\infty <\infty,
    \end{equation*} 
    implying that every element of $\mathcal{Y}^n$ is Bochner integrable.
    Hence, Relation \eqref{eq:almostdual02} and \eqref{eq:bochner} yields 
    \begin{equation}\label{eq:dual03}
        es_\alpha(L)= \sup_n \left\{ \sup_{Y\in \mathcal{Q}^n_{es}} E[LY]\right\}
        \leq \sup_{Y\in \mathcal{Q}_{es}} E[LY], 
    \end{equation}
    where 
    \begin{equation*}
        \mathcal{Q}^n_{es}=\left\{Y\in \mathcal{Q}\colon Y=\frac{1}{\alpha} \int_I \mathbf{Y} d\mu \text{ for some }\quad \mathbf{Y} \in \mathcal{Y}^n \right\}.
    \end{equation*}
    The last inequality follows from the fact that $\mathcal{Q}^n_{es}\subseteq \mathcal{Q}_{es}$ for all $n$ in $\mathbb{N}$.
    Relation \eqref{eq:dual02} and \eqref{eq:dual03} yields 
    \begin{equation*}
        es_\alpha(L)=\sup_{Y\in \mathcal{Q}_{es}} E[LY].
    \end{equation*}
    Clearly, $\mathcal{Q}_{es}$ is non-empty and convex subset of $\mathcal{Q}$.
    Hence, taking the $\sigma(L^\infty, L^1)$-closure $\mathcal{Q}_{es}$ do not affect the supremum.
   The last assertion directly follows from Proposition \ref{prop:almostdual}.
\end{proof}

\section{Properties of Expectile based expected shortfall }
\subsection{Comonotonicity}
A coherent risk measure $R:L^1\to (-\infty, \infty]$ is said to be \emph{comonotonic}, if
$R(L_1+L_2)= R(L_1)+R(L_2)$ for each comonotone pairs \footnote{We say $L_1$ and $L_2$ in $L^1$ are \emph{comonotone}, if $( L_1(\omega)-L_1(\omega'))( L_2(\omega)-L_2(\omega'))\geq 0$ for all $(\omega,\omega')\in \Omega\times \Omega$.} of loss profiles $L_1$ and $L_2$.
It is well known that the quantile based expected shortfall is comonotonic, while the expectile is not, see \citep{shapiro2013,foellmer2016,emmer2015} for instance.
It is therefore not astonishing that the expectile based expected shortfall is not comonotonic as shown in the following example.
\begin{example}
    Let $\varphi$ be the concave distortion function given by 
    \begin{equation}\label{eq:primitive}
        \varphi(t)=
        \begin{cases}
            0& \text{ if }t=0\\
            1-\frac{\alpha}{2} & \text{ if }t=1/2\\
            -\frac{t}{1-2t}\left[1-\frac{1-t}{\alpha(1-2t)}\ln{\left(1-2\alpha+\frac{\alpha}{t}\right)}\right] & \text{ if }t\neq 1/2\text{ and } 0
        \end{cases}.
    \end{equation}
    Then by Example \ref{eg:bern}, $\varphi$ corresponds to the distortion probability measure  $C_\varphi(A)=es_\alpha(1_A)$.
    For $L\sim Unif[0,1]$, following Example \ref{eg:unif}, we get $es_{0.34}(L)=0.706$.
    The concave distortion function $R_\varphi(L)=\int_{(0,1]} \varphi(P[L>x])dx=\int_{(0,1]} \varphi(1-x)dx=0.758 $.
    Hence, $R_\varphi(L)\neq es_{0.34}(L)$ and  by \citep[Corollary 4.95]{foellmer2016} $es_\alpha$ is not comonotonic.
\end{example}

In the spirit of \citep[Theorem 6]{delbaen2013}, the following proposition provides the smallest comonotonic risk measure that dominates  $es_\alpha$ uniformly on  $L^1$.

\begin{proposition}\label{prop:comonoton}
    Let $\varphi$ be the distortion function given by Relation \eqref{eq:primitive}, it holds that
    \begin{equation*}
        es_\alpha(L)\leq R_{\varphi}(L):=\int_{(0,1]} \varphi_+'(t)q_t(L) dt.
    \end{equation*}
    Moreover, $R_{\varphi}$ is the smallest law invariant coherent and comonotonic risk measure dominating $es_\alpha$ uniformly for each $L$ in $L^1$.
    In particular, $es_\alpha(1_A)=R_{\varphi}(1_A)=\varphi(P[A])$ for each $A\in \mathcal{F}$.
\end{proposition}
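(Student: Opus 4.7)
The plan is to invoke \citep[Theorem 6]{delbaen2013}, which asserts that for any law invariant coherent risk measure $R$ on $L^1$, the smallest law invariant, coherent and comonotonic risk measure dominating $R$ is the distortion risk measure $R_\psi$ whose concave distortion satisfies $\psi(p) = R(1_{A_p})$ for every event $A_p$ with $P[A_p] = p$. Since the previous proposition already established that $es_\alpha$ is law invariant and coherent, the problem reduces to identifying the function $p \mapsto es_\alpha(1_{A_p})$ and verifying it coincides with the $\varphi$ defined in \eqref{eq:primitive}; the third assertion $es_\alpha(1_A) = R_\varphi(1_A) = \varphi(P[A])$ will then emerge as a byproduct.

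First I would compute the expectile of a Bernoulli $1_A$ with $p := P[A]$. The first-order condition \eqref{eq:foc01} reduces to $(1-u)(1-e)p = u e(1-p)$, and solving gives $e_u(1_A) = (1-u)p/[p + u(1-2p)]$. Integrating over $u \in (0,\alpha]$ and dividing by $\alpha$ yields $es_\alpha(1_A)$; the substitution $v = p + u(1-2p)$ together with the separate linear case $p = 1/2$ produces exactly the piecewise formula \eqref{eq:primitive}. This is the content of Example \ref{eg:bern} already referenced in the statement, and it simultaneously determines $\varphi(p) = es_\alpha(1_{A_p})$ and extends $\varphi$ continuously to $\varphi(0) = 0$, $\varphi(1) = 1$.

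To actually apply the cited theorem, I still need $\varphi$ to be a bona fide concave distortion. Rather than differentiating \eqref{eq:primitive} twice, I would argue indirectly via the Kusuoka representation available for the law invariant coherent risk measure $es_\alpha$: writing $es_\alpha(L) = \sup_{m \in \mathcal{M}} \int_{(0,1]} ES_\beta(L)\, m(d\beta)$ for a suitable family $\mathcal{M}$, each $p \mapsto ES_\beta(1_{A_p}) = \min(1, p/\beta)$ is nondecreasing and concave in $p$, and both properties are preserved under suprema. With the hypotheses of \citep[Theorem 6]{delbaen2013} met, the theorem delivers the dominance $es_\alpha \leq R_\varphi$ together with the minimality of $R_\varphi$ among law invariant coherent comonotonic risk measures dominating $es_\alpha$.

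The last assertion is then immediate: for any concave distortion $\varphi$ one has $q_t(1_A) = 1_{\{t < P[A]\}}$, hence $R_\varphi(1_A) = \int_0^{P[A]} \varphi'_+(t)\, dt = \varphi(P[A])$, matching the value of $es_\alpha(1_A)$ computed in the first step. I expect the principal obstacle to be the concavity check for $\varphi$: direct algebraic manipulation of the piecewise expression \eqref{eq:primitive} is cumbersome and requires treating the singularity at $t=1/2$ carefully, which is exactly why routing concavity through the Kusuoka representation above is the cleaner path.
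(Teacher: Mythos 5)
The computation of $e_u(1_A)$ and the closing identity $R_\varphi(1_A)=\int_0^{P[A]}\varphi_+'(t)\,dt=\varphi(P[A])$ are fine, but the argument has a genuine gap exactly at the point you flag as the ``principal obstacle''. A pointwise supremum of concave functions is in general \emph{not} concave (only the mixture step $p\mapsto\int ES_\beta(1_{A_p})\,m(d\beta)$ preserves concavity; the outer supremum over the Kusuoka family $\mathcal{M}$ does not), so your indirect route does not establish that $\varphi$ is a concave distortion. Relatedly, the version of \citep[Theorem~6]{delbaen2013} you invoke -- that for \emph{any} law invariant coherent $R$ the smallest dominating law invariant coherent comonotonic risk measure is the distortion with $\psi(p)=R(1_{A_p})$ -- is stronger than what that theorem says: it is a statement about expectiles, and the blanket version is false precisely because $p\mapsto R(1_{A_p})$ need not be concave (take $R$ the maximum of two spectral risk measures whose distortions cross). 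Hence neither the concavity of $\varphi$ nor the domination $es_\alpha\le R_\varphi$ is actually justified as written; what you would need to prove is that every Kusuoka component of $es_\alpha$ has distortion dominated by $\varphi$ \emph{and} that $\varphi$ is concave, and you prove neither.

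The repair is the route the paper takes, and it uses your Bernoulli computation in a different way: apply Delbaen's expectile-specific bound to each $e_u$, namely $e_u(L)\le\int_{(0,1]}\frac{u(1-u)}{((1-2u)t+u)^2}\,q_t(L)\,dt$, whose distortion is $g_u(t)=\frac{(1-u)t}{(1-2u)t+u}=e_u(1_{A_t})$, and then average over $u\in(0,\alpha]$ using Fubini. This yields $es_\alpha(L)\le\int_{(0,1]}\varphi_+'(t)\,q_t(L)\,dt$ with $\varphi_+'(t)=\frac{1}{\alpha}\int_0^\alpha g_u'(t)\,du$, and concavity of $\varphi$ is immediate because each $g_u$ is concave on $[0,1]$ for $u\le 1/2$ and \emph{integration} in $u$ (unlike the supremum over $\mathcal{M}$) preserves concavity; equivalently $\varphi_+'$ is decreasing in $t$. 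With the domination and concavity in hand, minimality follows as you intend: any comonotonic, law invariant, coherent $R_\psi\ge es_\alpha$ must satisfy $\psi(p)=R_\psi(1_{A_p})\ge es_\alpha(1_{A_p})=\varphi(p)$ by Example~\ref{eg:bern}, and $\psi\ge\varphi$ with both concave gives $R_\psi\ge R_\varphi$; the indicator identity then reads off as in your last paragraph.
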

\begin{proof}
    Following  \citep[Theorem 6]{delbaen2013}, for each $u$ in $(0,1/2]$, $e_u$ is dominated uniformly for each $L$ in $L^1$ by the smallest law-invariant, coherent and comonotonic risk measure  as:
    \begin{equation*}
        e_u(L)\leq \int_{(0,1]}\frac{(u(1-u)}{((1-2u)t+u)^2}q_t(L)dt.
    \end{equation*}
    Using Fubini's theorem yields 
    \begin{multline*}             
        es_\alpha(L) \leq \frac{1}{\alpha}\int_0^\alpha \left(\int_{(0,1]}\frac{u(1-u)}{((1-2u)t+u)^2}q_t(L) dt\right)du\\
        =\int_{(0,1]}\left(\frac{1}{\alpha}\int_0^\alpha \frac{u(1-u)}{((1-2u)t+u)^2} du\right)q_t(L) dt
        =\int_{(0,1]} \varphi_+'(t) q_t(L) dt=R_\varphi(L),
    \end{multline*}
    where
    \begin{equation*}
        \varphi_+'(t)=\begin{cases}
        \frac{-1}{(1-2t)^2}\left[\frac{1+(1-2t)\alpha}{t+(1-2t)\alpha}-\frac{1}{\alpha(1-2t)}\ln\left(1-2\alpha+\frac{\alpha}{t}\right)\right] & \text{ for }t\neq 1/2\\
        2\alpha\left(1-\frac{2}{3}\alpha\right) & \text{for } t=1/2
        \end{cases}.
    \end{equation*}
    Clearly, $R_\varphi$ is the smallest and law-invariant, coherent and comonotonic risk measure that dominates $es_\alpha$ uniformly.
    From Example \ref{eg:bern}, we also have  $es_\alpha(1_A)=\varphi(P[A])$.
\end{proof}

\subsection{Quantile Versus expectile based expected shortfall}\label{sec:sec42}
For a given risk level $\alpha$ in $(0,1/2]$, the expectile is less conservative than expectile based expected shortfall, that is, $e_\alpha\leq es_\alpha$, see \citep{daouia2019} for instance. However, as compared to the quantile based expected shortfall, it holds that $es_\alpha\leq ES_\alpha$ or $es_\alpha>ES_\alpha$ depending on the considered loss profile, see figure \ref{fig:fig01}.
\begin{figure}[H]
    \centering
    \begin{subfigure}{.49\textwidth}
        \includegraphics[width=\textwidth]{./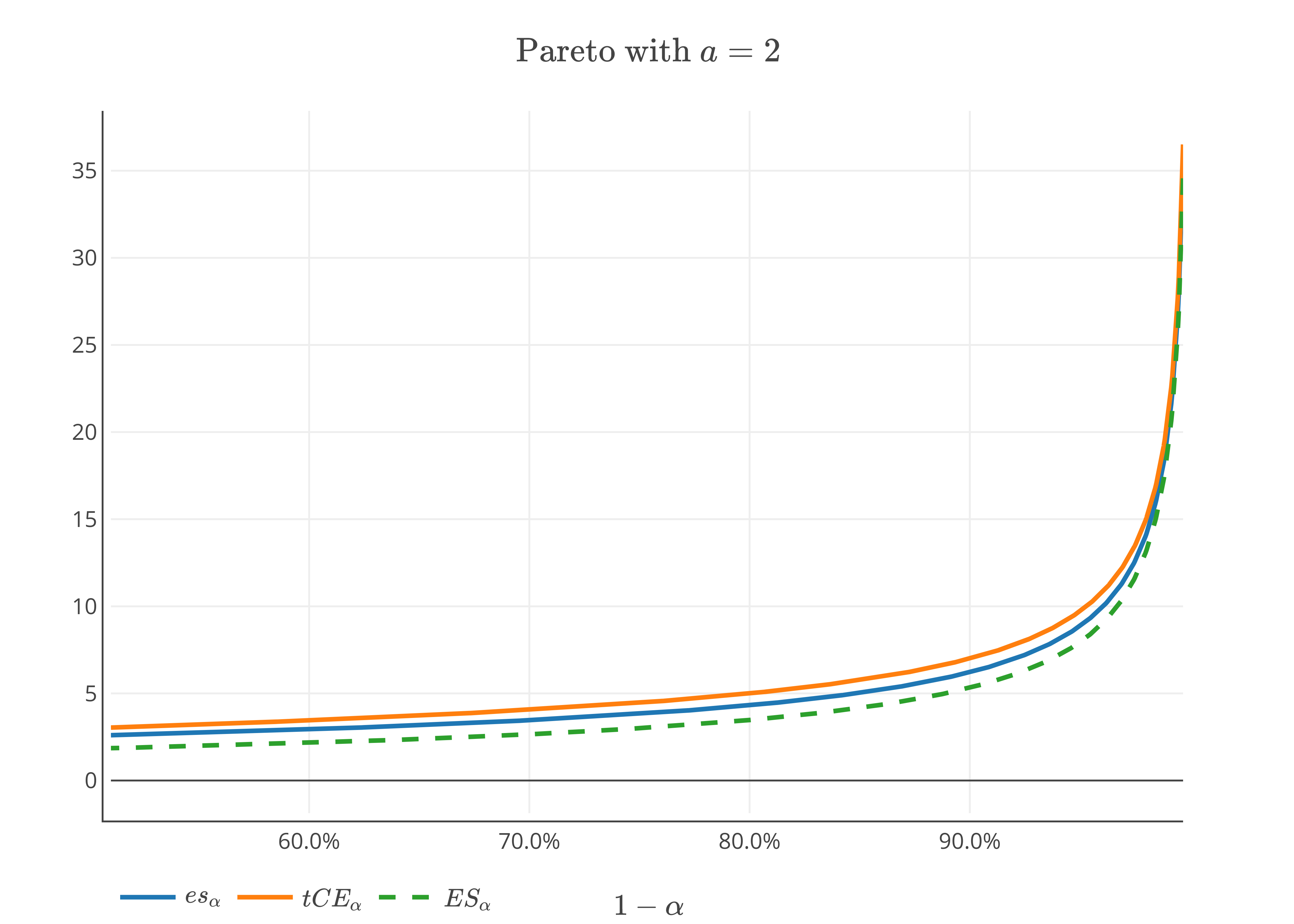}
    \end{subfigure}
    \begin{subfigure}{.49\textwidth}
        \includegraphics[width=\textwidth]{./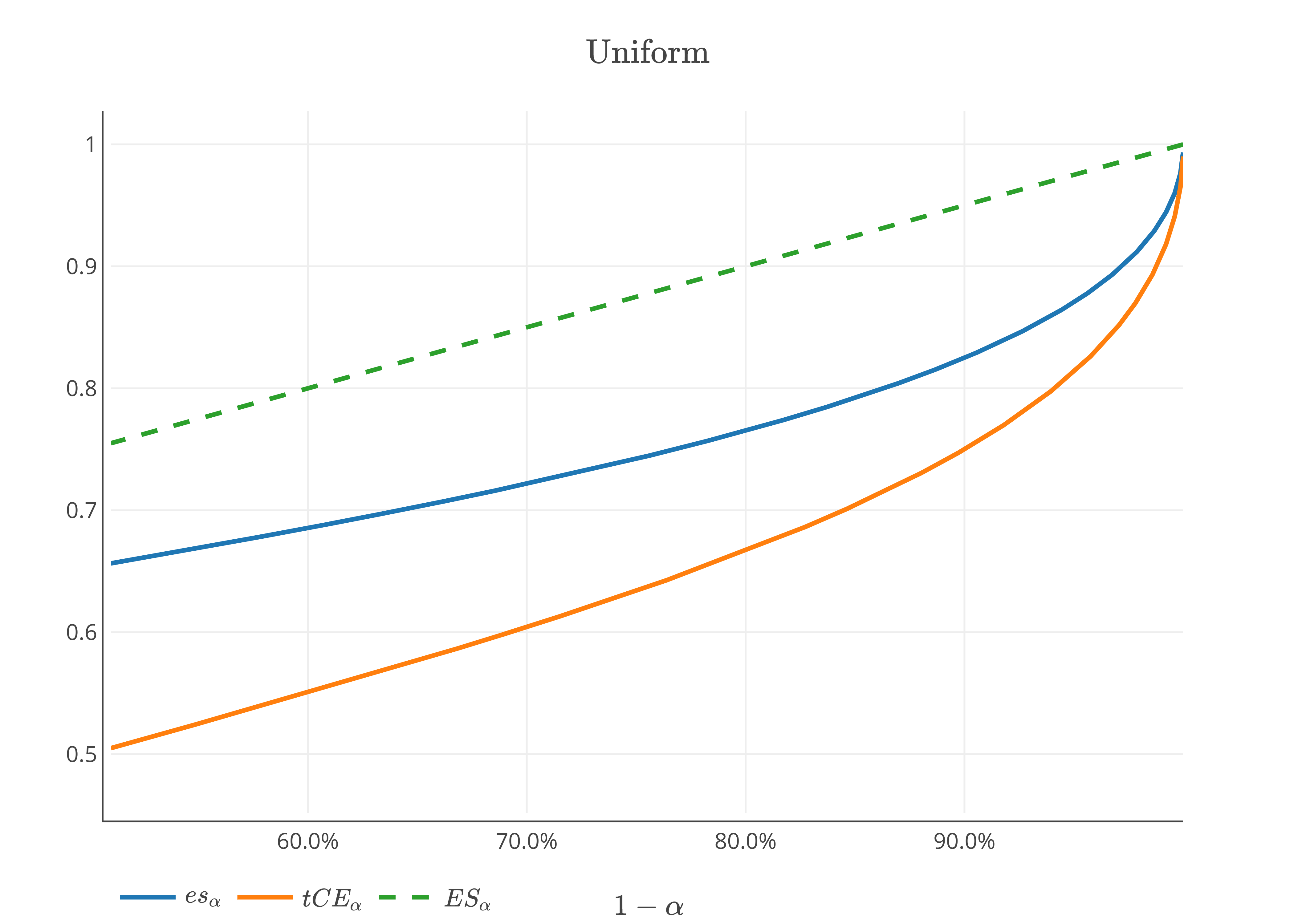}
    \end{subfigure}
    \begin{subfigure}{.49\textwidth}
        \includegraphics[width=\textwidth]{./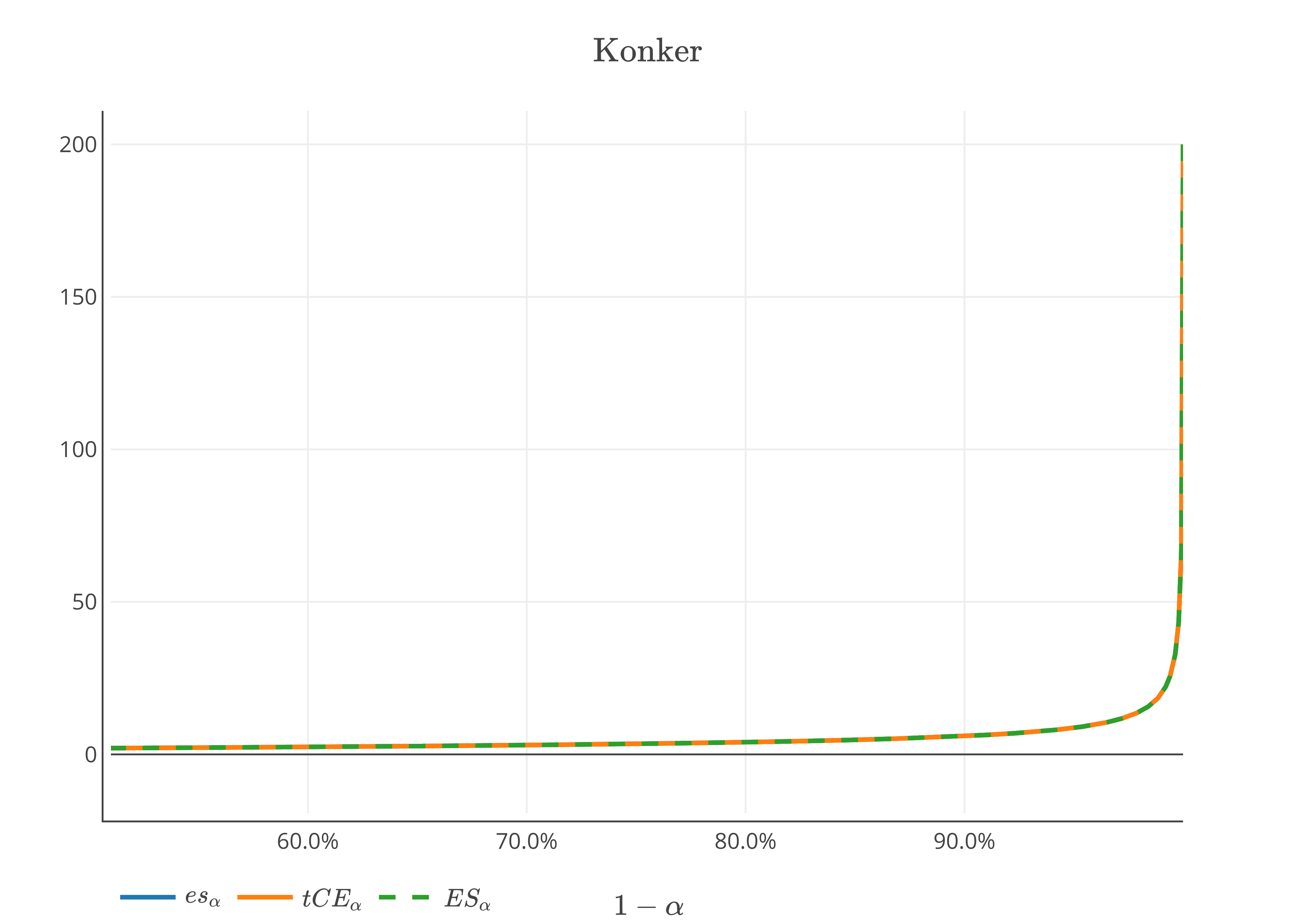}
    \end{subfigure}
    \begin{subfigure}{.49\textwidth}
        \includegraphics[width=\textwidth]{./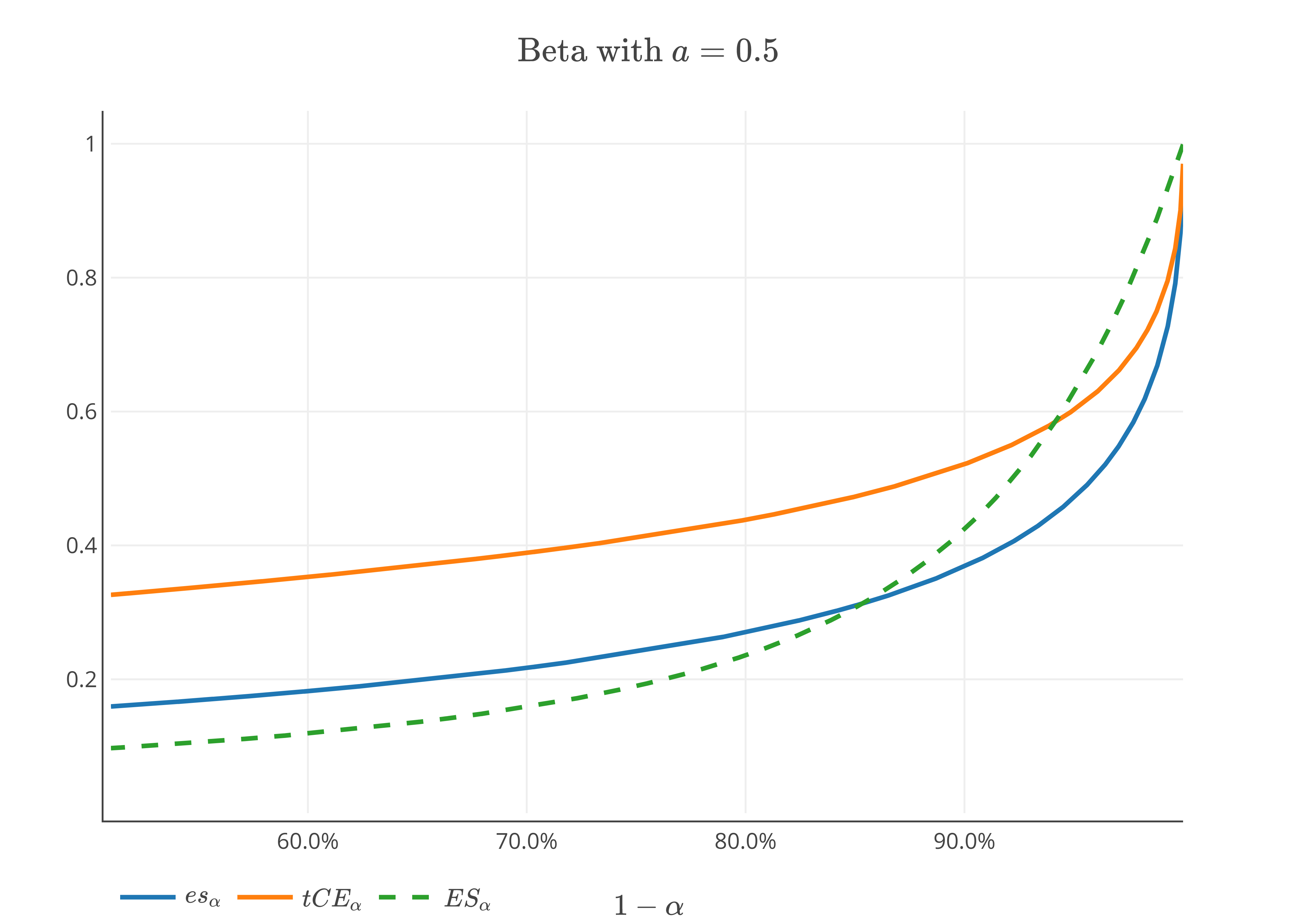}
    \end{subfigure}
    \caption{Graphs of $es_\alpha$, $tce_\alpha$ and $ES_\alpha$ for Pareto, uniform, Konker and beta distributions.}
    \label{fig:fig01}
\end{figure}
Using the lower bounds of expectile given in \citep[Proposition 3.1]{mekonnen2019}, we provide a family of lower bounds for $es_\alpha$ in terms of convex combination of expected shortfalls as follows. 
\begin{proposition}\label{prop:lowerbound}
    For each $\beta$ in $(0,1)$, it holds that 
    \begin{equation*}
        (1-\gamma_\beta)ES_\beta(L)+ \gamma_\beta E[L] \leq es_\alpha(L),
    \end{equation*}
    where 
    \begin{equation}\label{eq:gamma}
        \gamma_\beta:=\begin{cases}
            \alpha & \text{ for } \beta=1/2\\
            \frac{1}{(1-2\beta)}-\frac{\beta}{\alpha(1-2\beta)^2 } \ln{\left(1-2\alpha+\frac{\alpha}{\beta}\right)} & \text{ for } \beta\neq 1/2
        \end{cases}.
    \end{equation}
    Furthermore, the risk measure $R_\alpha$ defined as
    \begin{equation*}
        R_\alpha(L):=\sup_{0<\beta<1} (1-\gamma_\beta)ES_\beta(L)+ \gamma_\beta E[L], \quad L\in L^1
    \end{equation*}
    is law invariant and coherent such that $R_\alpha(L)\leq es_\alpha(L)$ uniformly for $L$ in $L^1$ and $R_\alpha(1_A)=es_\alpha(1_A)$ for every $A\in \mathcal{F}$.
\end{proposition}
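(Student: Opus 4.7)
The plan is to invoke the lower bound for the expectile in terms of convex combinations of $ES_\beta$ and $E[L]$ provided by \citep[Proposition 3.1]{mekonnen2019}, which for each $u\in(0,1/2]$ and each $\beta\in(0,1)$ reads
\begin{equation*}
    e_u(L) \geq (1-h_\beta(u))\, ES_\beta(L) + h_\beta(u)\, E[L],
\end{equation*}
for an explicit weight $h_\beta(u)\in[0,1]$ depending on $\beta$ and $u$. Integrating this inequality with respect to $u$ over $(0,\alpha]$ and normalizing by $\alpha$ immediately yields
\begin{equation*}
    es_\alpha(L) \;=\; \frac{1}{\alpha}\int_0^\alpha e_u(L)\, du \;\geq\; (1-\gamma_\beta)\, ES_\beta(L) + \gamma_\beta\, E[L],
\end{equation*}
with $\gamma_\beta := \frac{1}{\alpha}\int_0^\alpha h_\beta(u)\, du$. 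The explicit expression \eqref{eq:gamma} is obtained by a direct primitive computation of this integral in $u$, treating separately the cases $\beta=1/2$ (no singularity, linear integrand) and $\beta\neq 1/2$ (a rational integrand in $u$ whose antiderivative produces the logarithmic term $\ln(1-2\alpha+\alpha/\beta)$).

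For the properties of $R_\alpha$, law invariance is inherited coordinatewise from $ES_\beta$ and $E[\cdot]$. Because $h_\beta(u)\in[0,1]$ for all admissible $u,\beta$, averaging gives $\gamma_\beta\in[0,1]$, so each functional $(1-\gamma_\beta)ES_\beta + \gamma_\beta E[\cdot]$ is a convex combination of two coherent risk measures and hence coherent; $R_\alpha$ is then coherent as the supremum of a family of coherent risk measures. The uniform domination $R_\alpha(L)\leq es_\alpha(L)$ is simply the first inequality after taking the supremum over $\beta$.

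For the equality on indicators, I would use $ES_\beta(1_A)=\min(1,P[A]/\beta)$ and $E[1_A]=P[A]$, so that
\begin{equation*}
    R_\alpha(1_A) = \sup_{0<\beta<1} \Bigl\{ (1-\gamma_\beta)\min(1,P[A]/\beta) + \gamma_\beta P[A] \Bigr\},
\end{equation*}
and then show by a first-order analysis in $\beta$ that the supremum is attained at $\beta^\ast = P[A]$ (the point at which the piecewise-defined integrand switches regime). Substituting $\beta^\ast$ and simplifying, one recovers exactly the value of the distortion function $\varphi(P[A])$ given by \eqref{eq:primitive}, and by Proposition \ref{prop:comonoton} this equals $es_\alpha(1_A)$.

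The main obstacle is the explicit identification of $\gamma_\beta$ and the optimization over $\beta$ in the indicator case. The integral in $u$ must be carried out without confusing the singular behavior at $\beta=1/2$, and the optimization in $\beta$ involves a non-smooth objective (with a kink at $\beta=P[A]$) so one must argue via one-sided derivatives that the kink itself is the maximizer. The verification that the resulting closed form coincides with $\varphi(P[A])$ is a bookkeeping exercise matching terms between \eqref{eq:gamma} and \eqref{eq:primitive}, but it is the technical heart of the indicator equality.
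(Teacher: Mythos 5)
Your proposal is correct and follows essentially the same route as the paper: integrate the pointwise expectile lower bound of \citep[Proposition 3.1]{mekonnen2019} over $u\in(0,\alpha]$ to obtain $\gamma_\beta$ (splitting the cases $\beta=1/2$ and $\beta\neq 1/2$), and obtain law invariance and coherence of $R_\alpha$ from the corresponding properties of $ES_\beta$, $E[\cdot]$ and suprema of coherent functionals. The only deviation is the indicator step, where your first-order optimization over $\beta$ at the kink is unnecessary: since $R_\alpha(1_A)\leq es_\alpha(1_A)$ is already established, it suffices (as the paper does) to plug in $\beta=P[A]$ and verify by direct computation that $(1-\gamma_{P[A]})ES_{P[A]}(1_A)+\gamma_{P[A]}P[A]$ equals the value of $es_\alpha(1_A)$ from Example \ref{eg:bern}, which sandwiches the supremum.
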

\begin{proof}
    Let $u$ be in $(0,\alpha]$ be given.
    From \citep[Proposition 3.1]{mekonnen2019}, for each $\beta$ in $(0,1)$ we get 
   \begin{equation*}
       \left(1-\frac{u}{(1-2u)\beta+\alpha}\right)ES_\beta(L)+\frac{u}{(1-2u)\beta+\alpha}E[L]\leq e_u(L).
   \end{equation*}
   Integrating both sides of the above inequality with respect to $u$ gives the first result of the proposition.
   The law invariant and coherent property of $R_\alpha$ directly follows from the properties of  $ES_\beta$.
   Let $A$ be in $\mathcal{F}$ such that $P[A]=p$ with $0<p<1$.
   A simple computation using $\beta=p$  yields $(1-\gamma_p)ES_p+\gamma_p E[1_A]=es_\alpha(1_A)$.
   Hence, $R_\alpha(1_A)=es_\alpha(1_A)$.
\end{proof}
\begin{remark}
    Note that the inequality $R_\alpha \leq es_{\alpha}$ can be strict as shown in Example \ref{eg:konker}.
\end{remark}

The expectile can be uniformly dominated by the convex combination of expected shortfalls, see \citep{mekonnen2019} for instance. 
However, this is not the case for the expectile based expected shortfall as shown by the following proposition. 
\begin{proposition}
    The expectile based expected shortfall $es_\alpha$ can not be dominated uniformly for $L$ in $L^1$ by coherent risk measures of the form
    \begin{equation*}
        (1-\lambda)ES_\beta(L)+\lambda ES_\delta(L),
    \end{equation*}
    for some $0<\beta\leq 1$, $0\leq \lambda\leq 1$ and $0<\delta\leq 1$.
\end{proposition}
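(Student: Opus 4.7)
The plan is to argue by contradiction and test the alleged inequality on indicators. Suppose some triple $(\lambda,\beta,\delta)$ with $\beta,\delta\in(0,1]$ and $\lambda\in[0,1]$ satisfies $es_\alpha(L)\leq (1-\lambda)ES_\beta(L)+\lambda ES_\delta(L)$ for every $L\in L^1$. Apply this to $L=1_A$ as $p:=P[A]\downarrow 0$. By the last assertion of Proposition~\ref{prop:comonoton} the left-hand side equals $\varphi(p)$, with $\varphi$ the distortion function from Relation~\eqref{eq:primitive}. Since $q_u(1_A)=\mathbf{1}_{\{u<p\}}$, one has $ES_\gamma(1_A)=\min(1,p/\gamma)$ for every $\gamma\in(0,1]$, so whenever $p<\min(\beta,\delta)$ the right-hand side reduces to $Cp$, where
\[
C:=\frac{1-\lambda}{\beta}+\frac{\lambda}{\delta}<\infty.
\]
The alleged domination would therefore force $\varphi(p)\leq Cp$ for all sufficiently small $p$.

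The crucial step is then to show that $\varphi(p)/p\to +\infty$ as $p\downarrow 0$. Because $\varphi$ is concave on $[0,1]$ with $\varphi(0)=0$, this ratio is non-increasing in $p$ and its limit at $0$ equals $\varphi'_+(0^+)$. I would invoke the integral representation of $\varphi'_+$ isolated in the proof of Proposition~\ref{prop:comonoton}, namely
\[
\varphi'_+(t)=\frac{1}{\alpha}\int_0^\alpha \frac{u(1-u)}{\bigl((1-2u)t+u\bigr)^2}\,du,
\]
and let $t\downarrow 0$. For every $u\in(0,\alpha]$ the factor $(1-2u)>0$, so the integrand is non-increasing in $t$ and converges monotonically to $(1-u)/u$, which fails to be integrable at $u=0$. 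Monotone convergence gives $\varphi'_+(0^+)=+\infty$; the same conclusion is apparent from the explicit formula stated just after Proposition~\ref{prop:comonoton}, where the logarithmic term $\frac{1}{\alpha(1-2t)}\ln(1-2\alpha+\alpha/t)$ blows up as $t\downarrow 0$. Choosing $p$ small enough then contradicts $\varphi(p)\leq Cp$ and closes the argument.

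The main obstacle, and really the only non-routine point, is verifying this blow-up of $\varphi$ at the origin; the translation of the inequality to indicators and the computation of $ES_\gamma(1_A)$ are direct. Conceptually, the proposition reflects the fact that $\varphi$ is a distortion with infinite left slope at $0$, so no finite convex combination of $ES$'s, whose corresponding distortions $t\mapsto \min(1,t/\gamma)$ have uniformly bounded slopes at $0$, can dominate $es_\alpha$ uniformly on $L^1$.
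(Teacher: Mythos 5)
Your proof is correct and follows essentially the same route as the paper: both reduce the domination to a comparison of distortion functions on indicators and derive a contradiction from the behavior of $\varphi$ at the origin, where the paper's assertion that the tangent to $\varphi$ at $(0,0)$ is vertical is exactly your statement $\varphi'_+(0^+)=+\infty$. Your version merely makes explicit (via $ES_\gamma(1_A)=\min(1,p/\gamma)$ and monotone convergence in the integral representation of $\varphi'_+$) what the paper asserts geometrically.
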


\begin{proof}
    We know that the concave distortion  function that corresponds to
    \begin{equation*}
        (1-\lambda)ES_\beta(L)+\lambda ES_\delta(L)
    \end{equation*}
    is given by
    $\varphi_{\lambda,\beta,\delta}(t)=(1-\lambda)(t/\beta\wedge 1)+\lambda (t/\delta \wedge 1)$.
    Furthermore, the tangent to $\varphi$ at $(0,0)$ and $(1,1)$ is given by $x=0$ and $y=\gamma_1 x+(1-\gamma_1)$, respectively.
    Suppose there exist a risk measure of the form $(1-\hat{\lambda})ES_{\hat{\beta}}+\hat{\lambda} ES_{\hat{\delta}}$ dominating $es_\alpha$ for some $0<\hat{\beta}, \hat{\delta}\leq 1$ and $0\leq \hat{\lambda}\leq 1$.
    It follows that $\varphi(t)\leq \varphi_{\hat{\lambda},\hat{\beta},\hat{\delta}}(t)$ for each $t$ in $[0,1]$.
    With out loss of generality we take the smallest one from these bounds, that is, $\varphi_{\hat{\lambda},\hat{\beta},\hat{\delta}}$ is tangent to the graph of $\varphi$ at $(0,0)$ and $(1,1)$.
    This contradict the fact that the tangent to $\varphi$ at $(0,0)$ is the $x$-axis.
    Hence, our supposition is false and hence, there is no such upper bound, see Figure \ref{fig:primitive}.
    \begin{figure}[H]
        \centering
        \includegraphics[width=0.8\textwidth]{./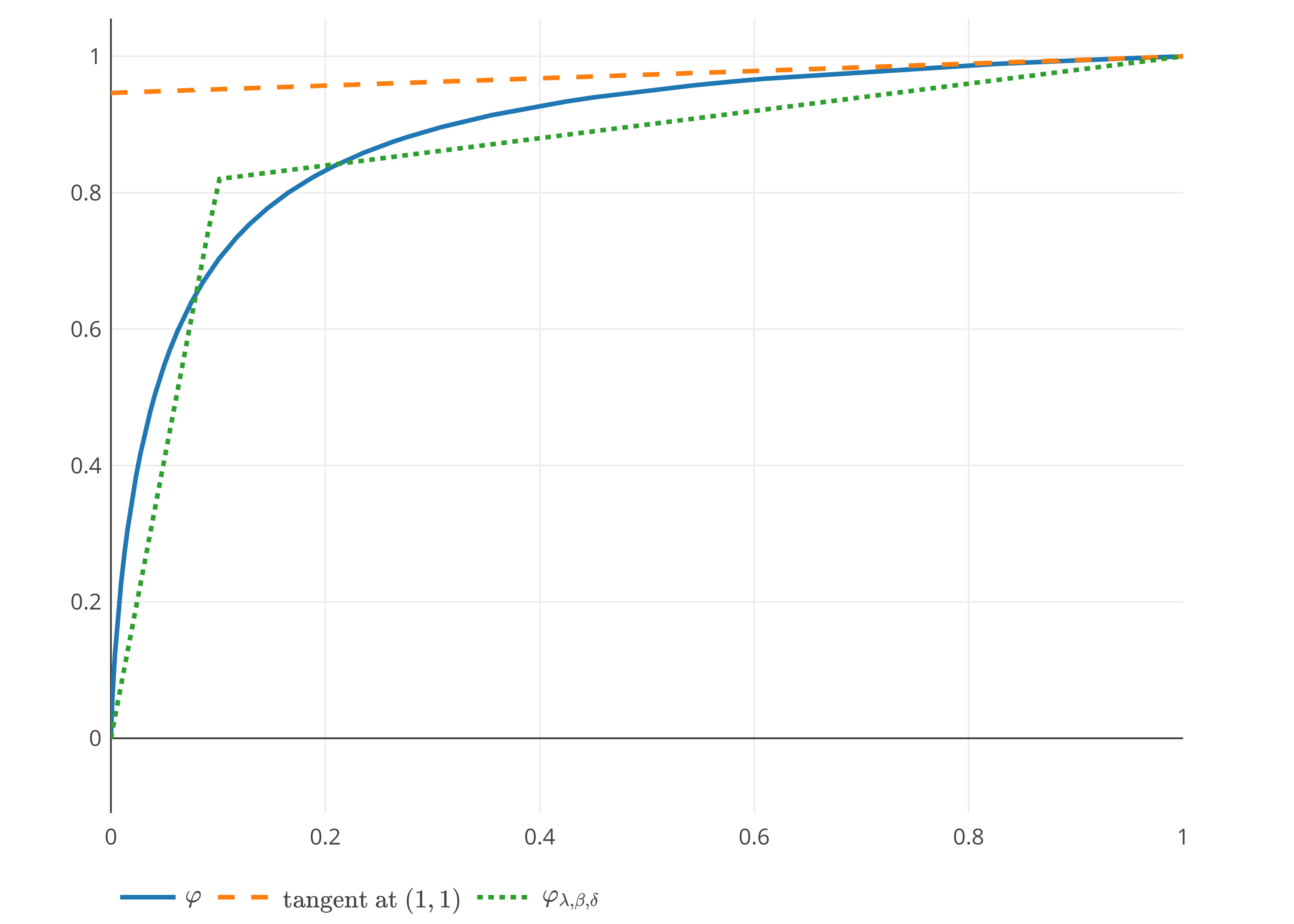}
        \caption{Graph of $\varphi$ and $\varphi_{\lambda,\beta,\delta}$ for $\alpha=10\%, \lambda=20\%$, $\beta=10\%$ and $\delta=100\%$.}
        \label{fig:primitive}
    \end{figure}
\end{proof}

\subsection{Asymptotic Behavior of expectile based expected shortfall}
For a given risk level $\alpha$, the quantile based   tail conditional expectation and expected shortfall coincides, that is $TCE_\alpha=ES_\alpha$, provided that the distribution of $L$ is continuous, see \citep{tasche2002, foellmer2016}. 
However, this is not true in general for $tce_{\alpha}$ and $es_{\alpha}$ neither of both even dominating the other depending on the considered loss profile, see Figure \ref{fig:fig01}.
As discussed in Section \ref{sec:sec42}, it also holds that $ES_\alpha>es_\alpha$ or $ES_\alpha\leq es_\alpha$, depending on the considered loss profile.
When $F_L$ is attracted by extreme value distribution, \citep{bellini2015a} and \citep{mao2015} provide an asymptotic relationship between value at risk and expectile, \citep{tang2012} gives an asymptotic behavior of expected shortfall in terms of value at risk,  \citep{mekonnen2019} provides the asymptotic behavior of expectile in terms expected shortfall, and \citep{daouia2019} also study the asymptotic behavior of $es_\alpha$ in terms of $tce_\alpha$ and $ES_\alpha$ for Fr\'{e}chet type distributions.
Following these results, we are interested to provide the asymptotic behavior of $es_\alpha$ with respect to $tce_\alpha$ and $ES_\alpha$  for loss profiles attracted \footnote{We say $F_L$ is attracted by an extreme value distribution function $H$ and  denoted by $MDA(H)$, if there exist constants $c_n>0$ and $d_n\in \mathbb{R}$ for each $n$ in $\mathbb{N}$ such that
\begin{equation*}
    \lim_{n\nearrow \infty} F_L^n(c_n x+d_n)=H(x).
\end{equation*}}
 by extreme value distribution $H$.
It is well known that $H$ can only be either Weibull ($\Psi_\eta$), Gumbel ($\Lambda$) or Fr\'{e}chet  ($\Phi_\eta$) with parameter $\eta>0$, where $\Psi_\eta(x)=\exp(-(-x)^\eta)$ for $x<0$,
$\Lambda(x)=\exp(-e^{-x})$ for $x \in \mathbb{R}$, and $\Phi_\eta(x)=\exp(-x^{-\eta})$ for $x>0$, see \citep{neil2015,dehaan2006} for instance.

The following proposition states the asymptotic relationship between $es_\alpha$, $tce_\alpha$ and $ES_\alpha$ based on each classes of extreme value distributions.
\begin{proposition}\label{prop:MDA}
    Let $\hat{x}:=\sup\{m\in \mathbb{R}\colon F_L(m)<1\}$. If $0<\hat{x}\leq \infty$, as the risk level $\alpha$ goes to $0$, it holds that 
    \begin{enumerate}[label=($\roman*$), fullwidth]
        \item (\emph{Fr\'{e}chet:}) If $F_L$ is in $MDA(\Phi_\eta)$ with $\eta>1$,  \begin{equation*}
            es_\alpha(L) \sim tce_\alpha(L) \sim (\eta-1)^{-\frac{1}{\eta}}ES_\alpha(L) \quad \text{and} \quad \frac{tce_\alpha(L)}{e_\alpha(L)}  \sim  \frac{ES_\alpha(L)}{q_\alpha(L)}.
        \end{equation*}
        \item (\emph{Weibull:}) If $F_L$ is  in $MDA(\Psi_\eta)$ with $\eta>0$,
        \begin{equation*}
            \frac{\hat{x}-ES_\alpha(L)}{\hat{x}-es_\alpha(L)}=o(1)\quad \text{and} \quad \frac{\hat{x}-tce_\alpha(L)}{\hat{x}-e_\alpha(L)}  \sim  \frac{\hat{x}-ES_\alpha(L)}{\hat{x}-q_\alpha(L)}.
        \end{equation*}
        \item (\emph{Gumbel:}) If $F_L$ is  in  $MDA(\Lambda)$,
        \begin{equation*}
            \frac{tce_\alpha(L)}{e_\alpha(L)}  \sim  \frac{ES_\alpha(L)}{q_\alpha(L)} \quad \text{and} \quad \frac{\hat{x}-tce_\alpha(L)}{\hat{x}-e_\alpha(L)}  \sim  \frac{\hat{x}-ES_\alpha(L)}{\hat{x}-q_\alpha(L)}
        \end{equation*}
        for the case $\hat{x}=\infty$ and $\hat{x}<\infty$, respectively.

        If further $F_L(x)=1-\exp{(-x^\tau r(x))}$ for some slowly varying function \footnote{A measurable function $r:\mathbb{R} \to \mathbb{R}$ is said to be slowly varying, if $\lim_{t\nearrow \infty}\frac{r(tx)}{r(t)}=1$, for each $x$ in $\mathbb{R}$. } $r$ and constant $\tau>0$ such that
    \begin{equation}\label{eq:mda04}
        \lim_{x \nearrow \infty}\left(\frac{r(c x)}{r(x)}-1\right) \ln{r(x)}=0
    \end{equation}
    for some constant $c>0$, it holds that
    \begin{equation*}
        es_\alpha(L) \sim tce_\alpha(L) \sim  ES_\alpha(L).
    \end{equation*}
    \end{enumerate}
\end{proposition}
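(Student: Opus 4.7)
The plan is to reduce each case to a combination of three standard ingredients from extreme value theory: (a) the asymptotic $e_\alpha(L)/q_\alpha(L) \to c_H$, with $c_H$ an explicit constant depending on the MDA class \citep{bellini2015a, mao2015, mekonnen2019}; (b) the asymptotic ratio between $ES_\alpha(L)$ and $q_\alpha(L)$ obtained from Karamata's theorem in the Fréchet and Weibull cases and from \citep{tang2012} in the Gumbel case; (c) the identity $tce_\alpha(L) = ES_{\beta^\ast}(L)$ with $\beta^\ast = P[L > e_\alpha(L)]$, which transfers tail estimates for $ES$ to $tce$ once $\beta^\ast/\alpha$ is controlled.

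For the Fréchet subcase, $\bar F_L$ is regularly varying with index $-\eta$, hence $u\mapsto q_u(L)$ is regularly varying at $0$ with index $-1/\eta$. I would first apply Karamata's theorem to $ES_\alpha = \frac{1}{\alpha}\int_0^\alpha q_u\,du$ and, using the regular variation inherited by $e_u \sim (\eta-1)^{-1/\eta}q_u$, also to $es_\alpha = \frac{1}{\alpha}\int_0^\alpha e_u\,du$, obtaining $ES_\alpha \sim \frac{\eta}{\eta-1}q_\alpha$ and $es_\alpha \sim (\eta-1)^{-1/\eta}ES_\alpha$. Applying $\bar F_L$ to the equivalence $e_\alpha \sim (\eta-1)^{-1/\eta}q_\alpha$ gives $\beta^\ast/\alpha \to \eta-1$, so $q_{\beta^\ast} \sim (\eta-1)^{-1/\eta}q_\alpha$ and $tce_\alpha = ES_{\beta^\ast} \sim (\eta-1)^{-1/\eta}ES_\alpha$. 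The second equivalence $tce_\alpha/e_\alpha \sim ES_\alpha/q_\alpha$ then follows by dividing.

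The Weibull subcase follows the same strategy applied to the gaps $\hat x - q_u$ and $\hat x - e_u$, which are regularly varying at $0$ with positive index $1/\eta$; Karamata applied to their integrals over $[0,\alpha]$ contributes the factor $\eta/(\eta+1)$, and combined with the limit $(\hat x - e_\alpha)/(\hat x - q_\alpha) \to c_\eta$ from \citep{bellini2015a, mao2015} and the corresponding control of $\beta^\ast/\alpha$, this yields the stated ratios. In the Gumbel subcase I would use that both $e_\alpha/q_\alpha$ and $ES_\alpha/q_\alpha$ tend to $1$ \citep{mao2015, tang2012} together with $tce_\alpha = ES_{\beta^\ast}$ to conclude $tce_\alpha/e_\alpha \sim ES_\alpha/q_\alpha$ directly; the bounded-support variant is analogous after replacing $L$ by $\hat x - L$. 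For the strengthened equivalence $es_\alpha \sim tce_\alpha \sim ES_\alpha$ under the Von Mises form $\bar F_L(x) = e^{-x^\tau r(x)}$, I would change variables $u = \bar F_L(x)$ in $\int_0^\alpha e_u\,du$, use integration by parts, and exploit condition \eqref{eq:mda04} to show that the resulting tail integral is asymptotic to $\alpha\, q_\alpha$.

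The main obstacle is the last claim in the Gumbel subcase: the Fréchet and Weibull parts are essentially routine Karamata computations, but in Gumbel one must quantify how the pointwise equivalence $e_u/q_u \to 1$ survives averaging over $u\in(0,\alpha]$, which is precisely what condition \eqref{eq:mda04} is designed to guarantee. The technical bookkeeping there requires the sharp expansion of $e_\alpha$ with controlled error terms from \citep{mekonnen2019}, since without a quantitative rate the averaging could a priori destroy the equivalence.
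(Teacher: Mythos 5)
Your Fr\'echet argument is fine (and is in fact a self-contained Karamata computation, where the paper simply cites \citet{daouia2019} and \citet{mekonnen2019}), but the Weibull step rests on a false premise. In $MDA(\Psi_\eta)$ the gap $\hat{x}-e_u$ is \emph{not} regularly varying at $0$ with index $1/\eta$, and $(\hat{x}-e_\alpha)/(\hat{x}-q_\alpha)$ does not tend to a finite constant $c_\eta$: the correct fact, which the paper takes from \citep[Proposition 3.3]{mao2015}, is $\hat{x}-q_\alpha(L)=o(\hat{x}-e_\alpha(L))$, the expectile gap being of order $\alpha^{1/(\eta+1)}$ against $\alpha^{1/\eta}$ for the quantile gap (already visible for the uniform law, where $1-q_\alpha=\alpha$ while $1-e_\alpha\approx\sqrt{\alpha}$). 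With your premises the first Weibull relation would come out as a nonzero constant limit, contradicting the $o(1)$ you are asked to prove; with the correct divergence it does follow, and the paper obtains it by writing $\hat{x}-ES_\alpha=\frac{1}{\alpha}\int_0^\alpha(\hat{x}-e_u)g(u)\,du$ with $g\to0$. The second Weibull ratio is then handled not through control of $\beta^\ast/\alpha$ (which in fact diverges here) but through the exact identity $tce_\alpha(L)=e_\alpha(L)+\frac{1}{\beta^\ast}E[(L-e_\alpha(L))^+]$ combined with \citep[Lemma 3.2 and Theorem 3.4]{mao2012}, giving the common limit $\eta/(\eta+1)$.

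In the Gumbel case you assert $e_\alpha/q_\alpha\to1$ as a general fact in $MDA(\Lambda)$; this is exactly what is \emph{not} available without the extra hypothesis -- otherwise condition \eqref{eq:mda04} would be superfluous -- and the first Gumbel display must be proved without it. The paper does so via $E[(L-e_\alpha(L))^+]=e_\alpha(L)\,o(\beta^\ast)$ (resp.\ $(\hat{x}-e_\alpha(L))o(\beta^\ast)$) from \citep[Relation 3.20]{mao2015} inserted into the same exact identity for $tce_\alpha$; your alternative route $tce_\alpha=ES_{\beta^\ast}\sim q_{\beta^\ast}=e_\alpha$ (applying $ES_\beta\sim q_\beta$ at level $\beta^\ast\to0$) would also work, but then the appeal to $e_\alpha\sim q_\alpha$ is both unjustified and unnecessary. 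Finally, your diagnosis of the ``main obstacle'' is off: once $e_u\sim q_u$ pointwise as $u\to0$ (with both eventually positive since $\hat{x}>0$), the averaged equivalence $es_\alpha\sim ES_\alpha$ follows from a trivial sandwich bound, no quantitative error expansion being needed; condition \eqref{eq:mda04} serves (via \citep[Proposition 2.4]{bellini2015a}) to secure the pointwise equivalence $e_\alpha\sim q_\alpha$ itself, not to make the averaging survive.
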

\begin{proof}
    The Fr\'{e}chet  case directly follows from \citep[Proposition 3]{daouia2019} and \citep[Proposition 4.1]{mekonnen2019}.
    For Weibull case, from \citep[Proposition 3.3]{mao2015}, we get $\hat{x}-q_\alpha(L)=o(\hat{x}-e_\alpha(L))$ as $\alpha$ goes to $0$.
    It follows that
    \begin{equation*}
        \hat{x}-ES_\alpha(L)=\frac{1}{\alpha}\int_0^\alpha (\hat{x}-q_u(L))du =\frac{1}{\alpha}\int_0^\alpha (\hat{x}-e_u(L))g(u)du,
    \end{equation*}
    for some function $g$ such that $g(u)$ goes to $0$ as $u$ goes to $0$.
    Hence,
    \begin{equation*}
        \hat{x}-ES_\alpha(L)\sim \frac{1}{\alpha}\int_0^\alpha (\hat{x}-e_u(L))g(\alpha)du=g(\alpha)(\hat{x}-es_\alpha(L)).
    \end{equation*}
    That is, $\hat{x}-ES_\alpha(L)=o(\hat{x}-es_\alpha(L))$.
    Since $tce_\alpha(L)=ES_{\beta^\ast}(L)$, it follows that 
    \begin{equation*}
        tce_\alpha(L)=q_{\beta^\ast}(L)+\frac{1}{\beta^\ast} E[(L-q_{\beta^\ast}(L))^+].
    \end{equation*}
    Rearranging gives 
    \begin{equation}\label{eq:mda01}
        \frac{tce_\alpha(L)}{e_\alpha(L)}=1+\frac{1}{\beta^\ast e_\alpha(L)} E[(L-e_\alpha(L))^+].
    \end{equation}
    Relation \eqref{eq:mda01} can be re-written as
    \begin{equation}\label{eq:mda02}
        \frac{\hat{x}-tce_\alpha(L)}{\hat{x}-e_\alpha(L)}=1-\frac{1}{\beta^\ast (\hat{x}-e_\alpha(L))} E[(L-e_\alpha(L))^+].
    \end{equation}
    From  [Lemma 3.2]\cite{mao2012}, as $x$ goes to $\hat{x}$ we have that
    \begin{equation*}
        \frac{1}{(\hat{x}-x)(1-F_L(x))} E[(L-x)^+]\sim \frac{1}{\eta+1}.
    \end{equation*}
    Since as $\alpha$ goes to $0$, we have  $e_\alpha$ goes to $\hat{x}$, it follows that
    \begin{equation*}
        \frac{1}{(\hat{x}-e_\alpha(L))\beta^\ast} E[(L-e_\alpha(L))^+]\sim \frac{1}{\eta+1}.
    \end{equation*}
    Hence, Relation \eqref{eq:mda02} yields
    \begin{equation*}
        \frac{\hat{x}-tce_\alpha(L)}{\hat{x}-e_\alpha(L)}\sim 1-\frac{1}{\eta+1}=\frac{\eta}{\eta+1}\sim \frac{\hat{x}-ES_\alpha(L)}{\hat{x}-q_\alpha(L)}.
    \end{equation*}
    The relation for $(\hat{x}-ES_\alpha(L))/(\hat{x}-q_\alpha(L))$ is due to \citep[Theorem 3.4]{mao2012}.

    As for the Gumbel case,
    from  \citet[Relation 3.20]{mao2015}, we have
    \begin{equation}\label{eq:mda03}
       E[(L-e_\alpha(L))^+]=\begin{cases}
           e_\alpha(L) o(\beta^\ast),& \hat{x}=\infty\\
           (\hat{x}-e_\alpha(L))o(\beta^\ast), & \hat{x}<\infty
       \end{cases}.
    \end{equation}
    The Relations \eqref{eq:mda01}, \eqref{eq:mda02} and \eqref{eq:mda03} together with \citep[Theorem 3.4]{mao2012}  yields
    \begin{equation*}
        \frac{tce_\alpha(L)}{e_\alpha(L)}\sim 1\sim \frac{ES_\alpha(L)}{q_\alpha(L)} \quad \text{and}\quad \frac{\hat{x}-tce_\alpha(L)}{\hat{x}-e_\alpha(L)}\sim 1\sim \frac{\hat{x}-ES_\alpha(L)}{\hat{x}-q_\alpha(L)}
    \end{equation*}
    for the case $\hat{x}=\infty$ and $\hat{x}<\infty$, respectively.
    The last asymptotic result holds as a result of $e_\alpha(L) \sim q_\alpha(L)$ under the given conditions, see \citep[Proposition 2.4]{bellini2015a}.
    This ends the proof the proposition.
\end{proof}

According to Proposition \ref{prop:MDA}, for Fr\'{e}chet and Gumbel (with condition \eqref{eq:mda04} ) cases, as the risk level $\alpha$ goes to $0$, the expectile based  tail conditional expectation and expectile based shortfall are equivalent.
In this case, both $tce_\alpha$ and $es_\alpha$ can be interpreted as the expectation of the loss under the event that the loss exceeds $e_\alpha$.
Figure \ref{fig:fig03} provides a graphical illustration for the ratio of $ES_\alpha/es_\alpha$ for Pareto, uniform, beta and exponential distributions.
Notice that the Pareto distribution with parameter $a$ is attracted by  Fr\'{e}chet type $MDA(\Phi_a)$, the uniform and beta distributions are attracted by Weibull type $MDA(\Psi_1)$ and the exponential distribution is attracted by the Gumbel type $MDA(\Lambda)$.
\begin{figure}[H]
    \centering
    \begin{subfigure}{.49\textwidth}
        \includegraphics[width=\textwidth]{./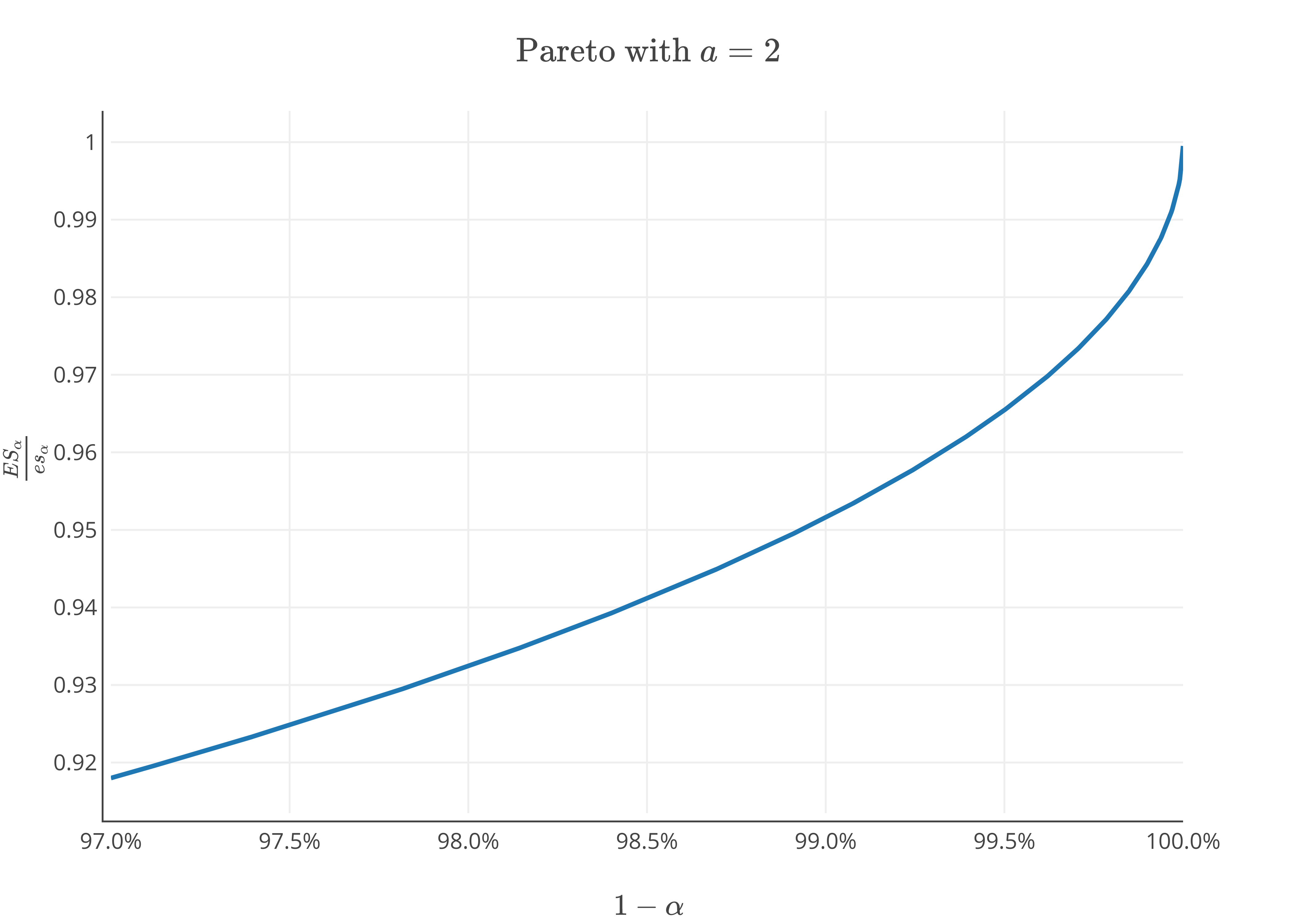}
    \end{subfigure}
    \begin{subfigure}{.49\textwidth}
        \includegraphics[width=\textwidth]{./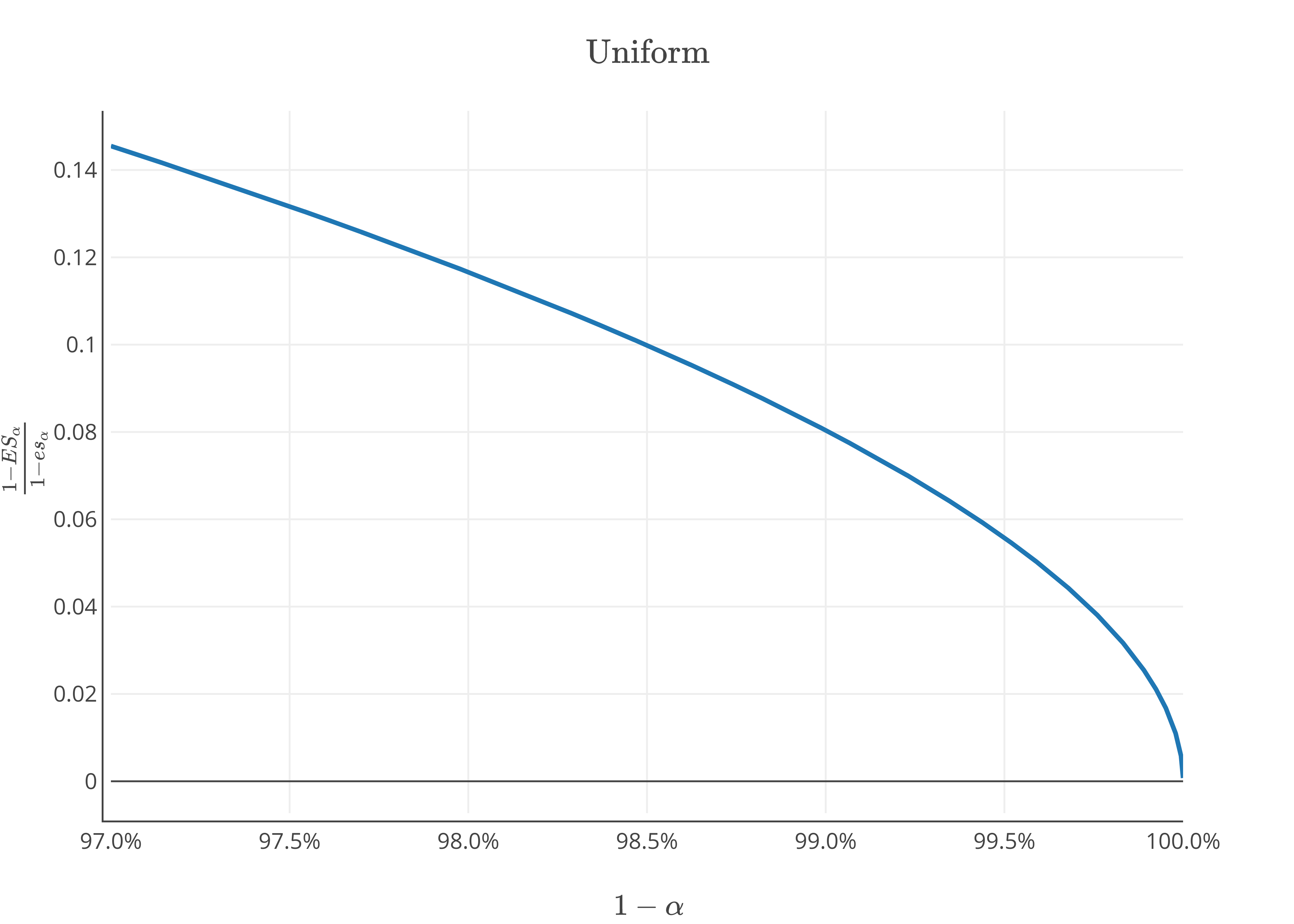}
    \end{subfigure}
    \begin{subfigure}{.49\textwidth}
        \includegraphics[width=\textwidth]{./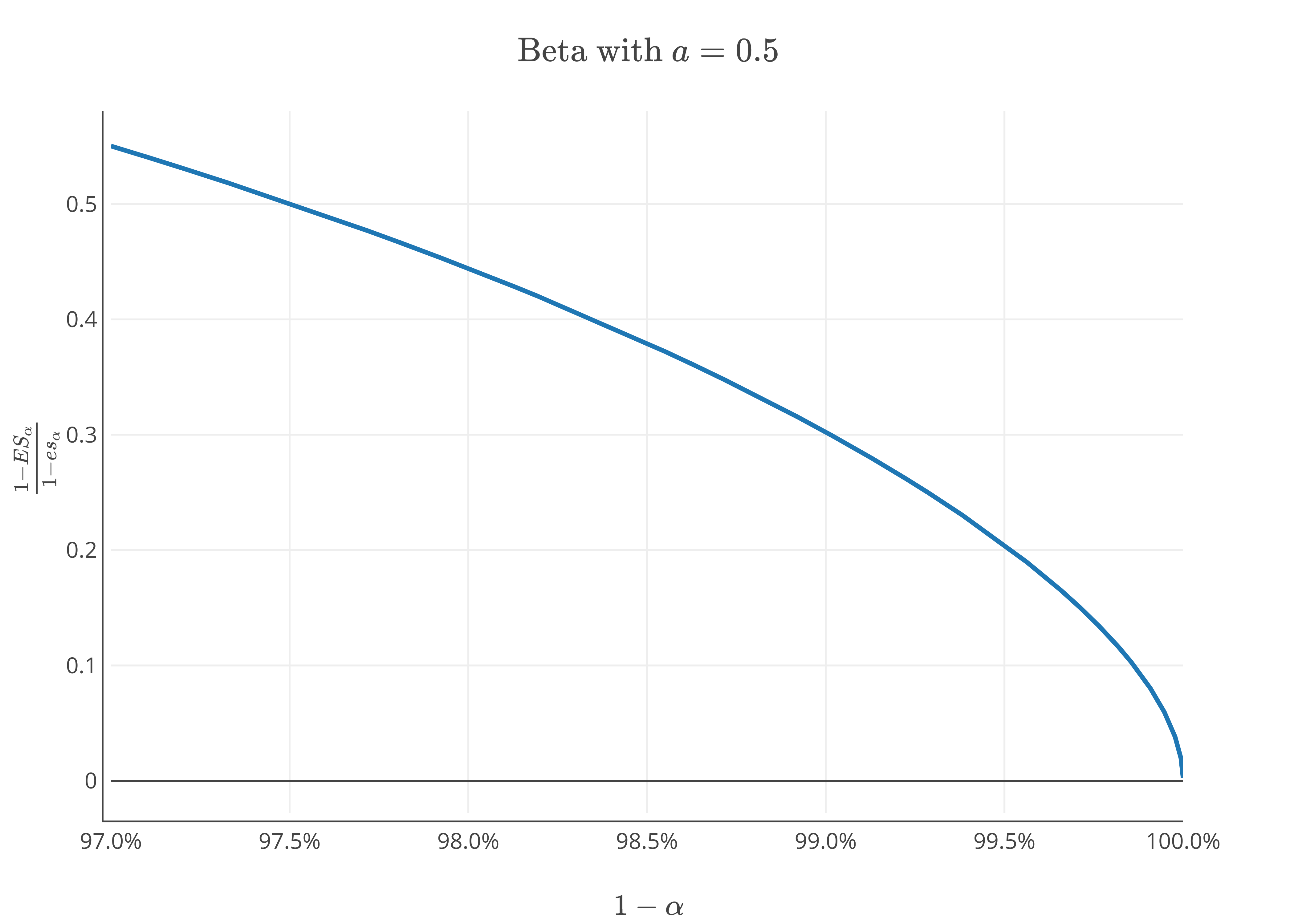}
    \end{subfigure}
    \begin{subfigure}{.49\textwidth}
        \includegraphics[width=\textwidth]{./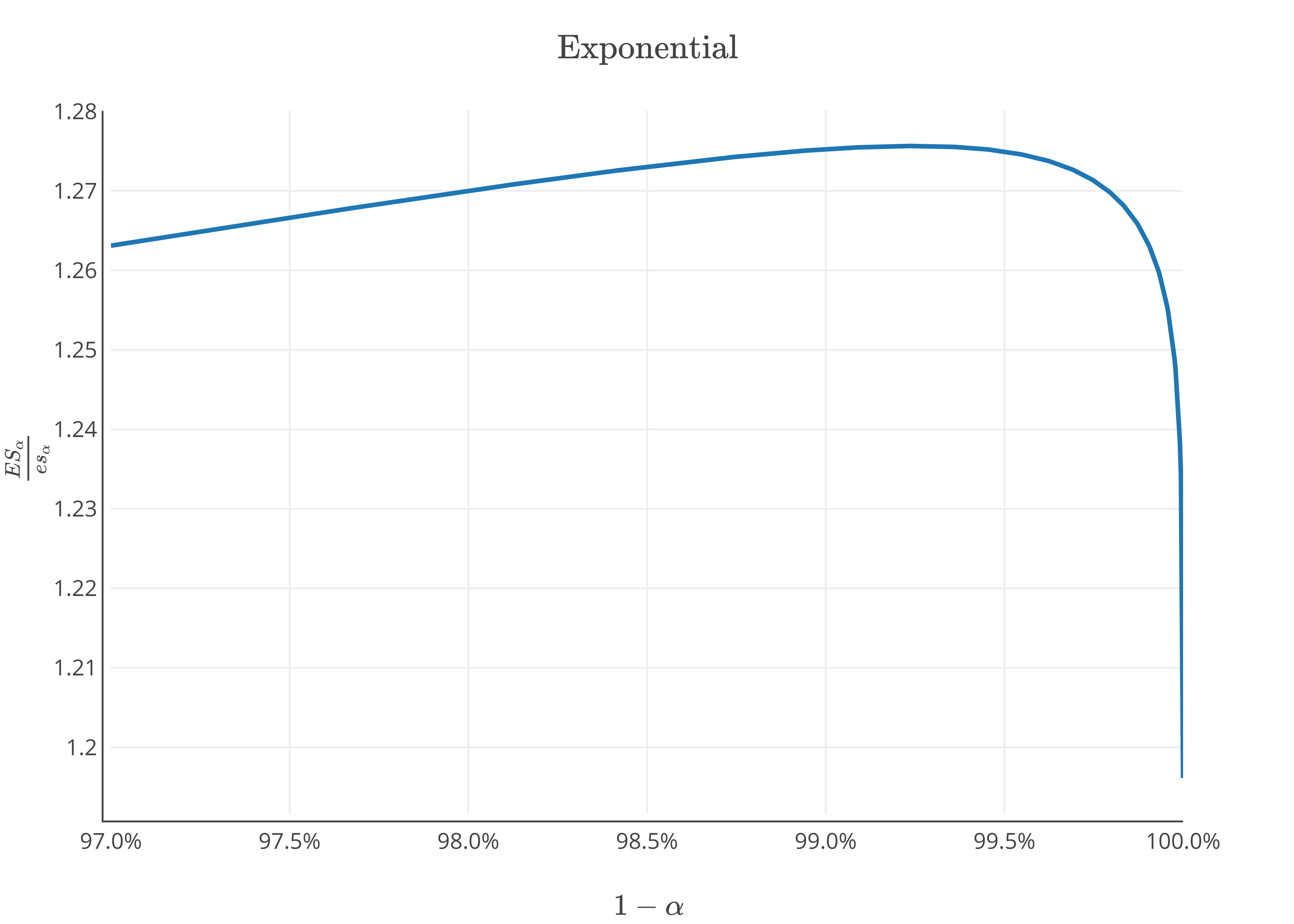}
    \end{subfigure}
    \caption{Graph of $ES_\alpha/es_\alpha$ (for Pareto and exponential) and $(\hat{x}-ES_\alpha)/(\hat{x}-es_\alpha)$ (for uniform and beta) distributions.}
    \label{fig:fig03}
\end{figure}

\section{Examples}
\begin{example}[Ber($p$)]\label{eg:bern}
    Let $L\sim Bern(p)$ for some $p$ in $(0,1)$.
    From \citep{delbaen2013,neil2015}, we get 
    \begin{equation*}
        e_\alpha(L)=\frac{(1-\alpha)p}{(1-2\alpha)p+\alpha}.
    \end{equation*}
    After integration
    \begin{equation*}
        es_\alpha =\begin{cases}
            1-\frac{\alpha}{2} & \text{if } p=1/2,\\
            -\frac{p}{1-2p}\left[1-\frac{1-p}{\alpha(1-2p)}\ln{\left(1-2\alpha+\frac{\alpha}{p}\right)}\right] & \text{if }p\neq 1/2
        \end{cases}.
    \end{equation*}
\end{example}

\begin{example}[Uniform]\label{eg:unif}
    Let $F_L(x)=x$ for $x$ in $[0,1]$. From \cite{mekonnen2019}, we have
    \begin{equation*}
        q_\alpha(L)=1-\alpha, \quad e_\alpha(L)=1-\frac{\sqrt{\alpha(1-\alpha)}-\alpha}{1-2\alpha}\quad \text{and}\quad ES_\alpha(L)=1-\frac{\alpha}{2}.
    \end{equation*}
    We also have $\beta^\ast=(\sqrt{\alpha(1-\alpha)}-\alpha)/(1-2\alpha)$ and  $tce_\alpha(L)=1-\frac{\beta^\ast}{2}$.
    After integration
    \begin{equation*}
        es_\alpha(L)=\frac{1}{2}-\frac{1}{4\alpha}\ln{\left((1-2\alpha)\sqrt{\frac{1+2\sqrt{\alpha(1-\alpha)}}{1-2\sqrt{\alpha(1-\alpha)}}}\right)}+\frac{\sqrt{\alpha(1-\alpha)}}{2\alpha}.
    \end{equation*}
    Following \citep{hua2011}, $F_L$ is attracted by Weibull type $MDA(\Psi_1)$.
    It also holds that  $\hat{x}=1$ and a simple computation yields
    \begin{equation*}
        \frac{1-ES_\alpha(L)}{1-es_\alpha(L)}= o(1) \quad \text{ and }\quad \frac{1-tce_\alpha(L)}{1-e_\alpha(L)}=\frac{1}{2}=\frac{1-ES_\alpha(L)}{1-q_\alpha(L)}.
    \end{equation*}
\end{example}

\begin{example}[Beta]
    For $a>0$, let $ F_L(x)=x^a$ with $x$ in $[0,1]$. 
    From \citep{mekonnen2019}, we have
    \begin{equation*}
        q_\alpha(L)=(1-\alpha)^{\frac{1}{a}}\quad \text{and}\quad ES_\alpha(L) =\frac{a\left(1-(1-\alpha)^{\frac{1}{a}+1}\right)}{\alpha(a+1)}.
    \end{equation*}
    From \citep{mekonnen2019}  we also have that $e_\alpha(L)=(1-\beta^\ast)^{1/a}$ and $tce_\alpha(L) =ES_{\beta^\ast}(L)$, 
    where $\beta^\ast$ solves
    \begin{equation*}
        \frac{1-\alpha}{1-2\alpha}=(1-\beta^\ast)^{\frac{1}{a}}\left(1+\frac{\beta^\ast}{a}+\frac{\alpha(a+1)}{a(1-2\alpha)}\right).
    \end{equation*}
    According to \citep{jones1994} and \citep{neil2015},  $e_\alpha(L)=q_\alpha(\tilde{L})$, where  $\tilde{L}$ is a random variable with probability density function given by
    \begin{equation*}
        g(x)=\frac{F_L(x) E[L]-E[L 1_{\{L\leq x\}}]}{\left(2(xF_L(x)-E[L 1_{\{L\leq x\}}])+E[L]-x\right)^2}, \quad  0<F_L(x)<1.
    \end{equation*}
    In the case of the Beta distribution, it follows that 
    \begin{equation*}
        g(x)=\frac{a}{a+1}\frac{x^a(1-x)}{\left(2x^{a+1}-(a+1)x+a\right)^2}, \quad 0<x<1.
    \end{equation*}
    which yields
    \begin{multline*}
        es_\alpha(L) = \frac{1}{\alpha}\int_{0}^{\alpha} q_{u}(\tilde{L})du 
        =E\left[ \tilde{L}|\tilde{L}>q_{\alpha}(\tilde{L}) \right]\\
        =\frac{a}{\alpha(a+1)}\int_{(1-\beta^\ast)^{\frac{1}{a}}}^1  \frac{x^{a+1}(1-x)}{\left(2x^{a+1}-(a+1)x+a\right)^2}dx.
    \end{multline*}
    It is known that $F_L$ belongs to the Weibull type $MDA(\Psi_1)$, see \citep{mao2012,mao2015} for instance.
    We also have $\hat{x}=1$ and as a result of Proposition \ref{prop:MDA}, it holds  
    \begin{equation*}
        \frac{1-ES_\alpha(L)}{1-es_\alpha(L)}=o(1) \quad \text{and}\quad \frac{1-tce_\alpha(L)}{1-e_\alpha(L)}\sim \frac{1-ES_\alpha(L)}{1-q_\alpha(L)}.
    \end{equation*}
\end{example}

\begin{example}[Exponential]\label{eg:exponential}
    Let $F_L(x)=1-\exp(-x)$ for $x\geq 0$.
    Then $ES_\alpha(L)=1-\ln{\alpha}$ and $e_\alpha(L) =1+\mathcal{W}\left((1-2\alpha)/(\alpha e)\right)$, where  $\mathcal{W}$ is Lambert function\footnote{$\mathcal{W}$ is a function such that $xe^x=y$ if and only if $x=\mathcal{W}(y)$.}.
    We also have
    \begin{equation*}
        g(x)=\frac{xe^{-x}}{(1-x-2e^{-x})^2},\quad x\geq 0.
    \end{equation*}
    Hence,
    \begin{equation*}
        es_\alpha(L) =\frac{1}{\alpha}\int_{1+W\left(\frac{1-2\alpha}{\alpha e}\right)}^\infty\frac{x^2e^{-x}}{(1-x-2e^{-x})^2}dx.
    \end{equation*}
    From \citep{bellini2015a}, we get $e_\alpha(L)\sim q_\alpha(L)$.
    This implies that $es_\alpha\sim ES_\alpha(L)\sim tce_\alpha(L)$.
\end{example}
\begin{example}[Konker distribution]\label{eg:konker}
    Let $F_L(x)=\frac{4+x^2+x\cdot\sqrt{x^2+4}}{2\left(x^2+4\right)}$ for $x$ in $\mathbb{R}$.
    According to \citep[Remark 3.3]{mekonnen2019} and \cite{koenker1993}, the expectile and the value at risk coincide and given by
    \begin{equation*}
        e_\alpha(L)=q_\alpha(L)=\frac{1-2\alpha}{\sqrt{\alpha(1-\alpha)}}
    \end{equation*}
    It implies that 
    \begin{equation*}
        es_\alpha(L)=ES_\alpha(L)=2\sqrt{\frac{1-\alpha}{\alpha}}.
    \end{equation*}
    In this case, $F_L$ is attracted by the Fr\'{e}chet type $MDA(\phi_2)$. 
    Clearly, it holds that 
    \begin{equation*}
       es_\alpha(L)=ES_\alpha(L)=tce_\alpha(L) \quad \text{ and }\quad \frac{tce_\alpha(L)}{e_\alpha(L)}=\frac{ES_\alpha(L)}{q_\alpha(L)}.
   \end{equation*}
   \end{example}
\begin{example}[Pareto]\label{eg:pareto}
    For $a>1$ and $x\geq 0$, let $ F_L(x)=1-(1/(x+1))^a$.
    For $a=2$,
    \begin{equation*}
        q_\alpha(L)=\frac{1}{\sqrt{\alpha}}-1, \quad ES_\alpha(L)=\frac{2}{\sqrt{\alpha}}-1\quad \text{ and }\quad e_\alpha(L)= \sqrt{\frac{1-\alpha}{\alpha}}.
    \end{equation*}
    From \citep{mekonnen2019}, we also have $\beta^\ast=\alpha/(1+2\sqrt{\alpha(1-\alpha)})$ and we get 
    \begin{equation*}
        tce_\alpha(L)=1+2e_\alpha(L) \quad \text{and} \quad es_\alpha(L)=e_\alpha(L) + \frac{\arcsin{(\sqrt{\alpha})}}{\alpha}.
    \end{equation*}
    It is known that $F_L$ is attracted by the Fr\'{e}chet type $MDA(\phi_2)$, see \citep{mao2015,maolv2012} for instance.
    A simple computation shows that  
    \begin{equation*}
       tce_\alpha(L) \sim es_\alpha(L) \sim ES_\alpha(L) \quad \text{and} \quad \frac{ES_\alpha(L)}{q_\alpha(L)}\sim 2\sim \frac{tce_\alpha(L)}{e_\alpha(L)}.
    \end{equation*}
\end{example}

\bibliographystyle{abbrvnat}
\bibliography{biblio}
\end{document}